\newtheorem{theorem}{Theorem}
\newtheorem{lemma}{Lemma}
\newtheorem{fact}{Fact}
\newtheorem*{remark}{Remark}
\newtheorem{definition}{Definition}
\definecolor{codegreen}{rgb}{0,0.6,0}
\definecolor{codegray}{rgb}{0.5,0.5,0.5}
\definecolor{codepurple}{rgb}{0.58,0,0.82}
\definecolor{backcolour}{rgb}{0.95,0.95,0.92}
\lstdefinestyle{mystyle}{
    backgroundcolor=\color{backcolour},   
    commentstyle=\color{codegreen},
    keywordstyle=\color{magenta},
    numberstyle=\tiny\color{codegray},
    stringstyle=\color{codepurple},
    basicstyle=\ttfamily\footnotesize,
    breakatwhitespace=false,         
    breaklines=true,                 
    captionpos=b,                    
    keepspaces=true,                 
    numbers=left,                    
    numbersep=5pt,                  
    showspaces=false,                
    showstringspaces=false,
    showtabs=false,                  
    tabsize=2
}
\DeclareRobustCommand\onedot{\futurelet\@let@token\@onedot}
\def\@onedot{\ifx\@let@token.\else.\null\fi\xspace}
\renewcommand{\paragraph}[1]{\noindent\textbf{#1}\quad} 
\newcommand{\erclogowrapped}[1]{%
\setlength\intextsep{0pt}%
\begin{wrapfigure}[3]{r}{#1*\real{1.1}}%
\includegraphics[width=#1]{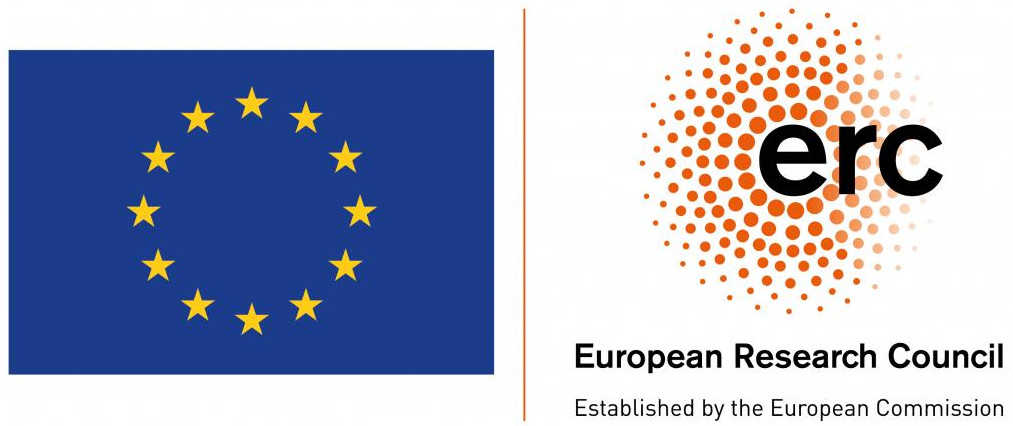}%
\end{wrapfigure}%
}
\author{Monika Henzinger}
\address{  
  Institute of Science and Technology (ISTA), 
  Klosterneuburg, Austria}
\email{monika.henzinger@ista.ac.at}
\author{Nikita P.  Kalinin}
\address{  
  Institute of Science and Technology (ISTA), 
  Klosterneuburg, Austria}
\email{nikita.kalinin@ist.ac.at}
\author{Jalaj Upadhyay}
\address{  
  Rutgers University, New Jersey, USA}
\email{jalaj.upadhyay@rutgers.edu}
\title{Binned Group Algebra Factorization for Differentially Private Continual Counting}
\begin{document}

\begin{abstract}

We study memory-efficient matrix factorization for differentially private counting under continual observation. While recent work by Henzinger and Upadhyay 2024 \cite{group_algebra} introduced a factorization method with reduced error based on group algebra, its practicality in streaming settings remains limited by computational constraints. We present new structural properties of the group algebra factorization, enabling the use of a binning technique from Andersson and Pagh (2024) \cite{andersson2024streaming}. By grouping similar values in rows, the binning method reduces memory usage and running time to $\tilde O(\sqrt{n})$, where $n$ is the length of the input stream, while maintaining a low error. Our work bridges the gap between theoretical improvements in factorization accuracy and practical efficiency in large-scale private learning systems.
\end{abstract}

\maketitle

\section{Introduction}
With the expansion of data collection, protecting individuals' sensitive information has become a fundamental challenge in modern computing. Differential Privacy (DP) \cite{dwork2006differential} is a well-established framework that provides strong theoretical guarantees for protecting individual privacy. However, ensuring DP under continual observation—where data is accessed and updated over time—introduces additional challenges \cite{andersson2023smooth, andersson2024improved, bolot2013private, mcmahan2022federated, kalinin2025continual, edmonds2020power, cohen2024lower}. 

In this work, we focus on the problem of \textbf{private prefix sums} under continual observation \cite{dwork2010differential}, a fundamental challenge in privacy-preserving computations. This problem arises in a wide range of applications, including differentially private Stochastic Gradient Descent (DP-SGD) for machine learning \cite{Denisov, Kairouz, mckenna2024scaling, choquette2023multi, BandedMatrix, zhang2025locally, cyffers2024differentially}, private histogram estimation \cite{cardoso2022differentially, chan2012differentially, upadhyay2019sublinear, epasto2023differentially}, private clustering \cite{dupre2024making}, and graph analysis \cite{upadhyay2021differentially, fichtenberger2021differentially}.

A promising approach for computing prefix sums under differential privacy is to use a \textbf{Matrix Factorization Mechanism} \cite{li2015matrix}. This method represents the prefix sum computation as a matrix-vector multiplication: $Mx$, where $M = (\mathbbm{1}_{j \le i}) \in \mathbb{R}^{n\times n}$ is a lower triangular matrix of coefficients and $x \in \mathbb{R}^{n}$ is the private data. To ensure differential privacy, we first factorize $M$ into the product of two matrices:  $M^{n \times n} = L^{n\times m} \times R^{m \times n}$.
Using this decomposition, we compute the intermediate product $Rx$, add appropriately scaled Gaussian noise $z \in \mathbb{R}^{m}$, and then apply a post-processing step using matrix $L$:  $\widehat{Mx} = L (Rx + z) = Mx + Lz$.
A common way to measure the introduced error is through the expected mean square error or the maximum expected square error:
\begin{equation}
    \sqrt{\mathbb{E}\|\widehat{Mx} - Mx\|_F^2 / n} \quad \text{(MeanSE)}, \qquad \max_i \sqrt{\mathbb{E}[(\widehat{Mx} - Mx)_i^2]} \quad \text{(MaxSE)}.
\end{equation} Matrix factorization allows us to achieve better accuracy than trivial factorizations, where noise is added directly to $x$ or $Mx$ via the Gaussian mechanism, while maintaining the same level of privacy.
However, one of the main challenges is that the optimal factorization of $M$ is unknown, which motivates the study of efficient factorizations that minimize error while maintaining computational feasibility.

The problem of optimal matrix factorization has been extensively studied in recent years \cite{choquette2023multi, CNPBINDPL, Henzinger, Kairouz, kalinin2024, dvijotham2024efficient, Denisov, fichtenberger2023constant, mcmahan2024hassle}. The lowest currently known factorization error has been achieved using \textbf{Group Algebra Factorization} \cite{group_algebra}, which decomposes $M$ into two non-square matrices of sizes $n \times 2n$ and $2n \times n$ with potentially complex values given by polynomials of the powers of the $2n$-th root of unity. Formally, the group algebra factorization $M^{n \times n} = L^{n \times 2n} \times R^{2n \times n}$, is given by $L_{i,j} = R_{i,j} = b_f(\omega^{j - i})$, with an appropriate range of indices $i,j$ for each matrix, for
    \begin{equation}
        b_f(x) = \frac{1}{2n} \sum\limits_{l = 0}^{2n - 1} x^l \left(\sum\limits_{k = 0}^{n - 1} \omega^{kl} \right)^{1/2},
    \end{equation}
where $\omega = e^{i\pi/n}$ a $2n$-th root of unity.  To evaluate the quality of this factorization, we use the Maximum Squared Error (MaxSE) and Mean Squared Error (MeanSE), computed as:
\begin{align}
    \text{MaxSE}(L, R) = \|L\|_{2 \to \infty} \|R\|_{1\to 2} \quad \text{and} \quad \text{MeanSE}(L, R) = \frac{1}{\sqrt{n}}\|L\|_{F} \|R\|_{1\to 2},
\end{align}
where $\|.\|_{1\to 2}$ is the maximum $\ell_2$ column norm and $\|.\|_{2 \to \infty}$ is the maximum $\ell_2$ row norm.
It has been shown that the Group Algebra Factorization satisfies the following upper bound for both MaxSE and MeanSE:
\begin{equation}
    \max\big\{\text{MeanSE}(L, R),\; \text{MaxSE}(L, R) \big\} \le \frac{1}{2} + \frac{1}{2n} \sum\limits_{l = 1}^{n} \frac{1}{\sin\left(\frac{\pi(2l - 1)}{2n}\right)} < 1 + \frac{\log n}{\pi},
\end{equation} achieving the lowest currently known theoretical bound for the factorization error.
The general known lower bound is given by Matoušek et al. \cite{matouvsek2020factorization}:
\begin{equation}
    \inf_{LR=M}\min\big\{\text{MeanSE}(L, R),\;\text{MaxSE}(L, R)\big\} \ge \frac{\log (\frac{2n + 1}{3}) + 2}{\pi},
\end{equation}
resulting in a constant gap of $\approx0.49$ between this lower bound and the best known upper bound.

A fundamental limitation of the matrix factorization mechanism is its higher computational and memory cost compared to directly adding noise to the private vector $x$, which corresponds to a trivial but suboptimal factorization. Several works have explored ways to reduce this overhead \cite{dvijotham2024efficient, andersson2024streaming, kalinin2024}. In this work, we adapt a technique introduced by Andersson and  Pagh 2024 \cite{andersson2024streaming}, which improves efficiency through \textbf{binning}\footnote{Not to be confused with quantization (which approximates individual \textit{values}, not subsegments, by mapping them to discrete levels).}. The key idea is to group the structurally similar subsegments of each row of the matrix $L$ and replace them with a shared representative value, a \textit{bin}. This process decreases memory usage while preserving the essential structure of the factorization. Furthermore, it accelerates the computation of the product with a vector $z$. This acceleration is achieved by precomputing the prefix sums of $z$ or generating noise directly in the form of prefix sums, enabling us to compute the product with a binned row in time proportional to the number of bins by efficiently retrieving partial sums of $z$ in constant time.

While binning improves efficiency, it inevitably introduces some additional approximation error. Our goal is to design a binning scheme that minimizes this error, striking a balance between computational gains and accuracy with respect to the original matrix.

\subsection{Contribution}

In this work, we present several \textbf{structural properties} of group algebra factorization matrices, allowing for a more precise analysis of their values and relationships. We prove that the values $b_{f}(\omega^{k})$ are real, resulting in a \textbf{real-valued factorization} for prefix sums. Specifically, we represent the values $b_f(\omega^k)$ as an alternating sum:
\begin{equation}
    b_f(\omega^{-t}) = \frac{1}{2\sqrt{n}} + \frac{1}{\sqrt{2}} \sum_{l = 0}^\infty \left|\binom{-1/2}{t + nl}\right| (-1)^{l},
\end{equation}
for $t \in [-n, n - 1]$, where $\binom{-1/2}{t + nl} = 0$ for $t + nl < 0$. This directly implies that $L$ and $R$ are real matrices. We use this representation to prove \textbf{upper and lower bounds} for the values $b_f(\omega^{k})$ and identify the \textbf{monotonicity regions}, namely the values $b_f(\omega^{k})$ increase for $k$ from $-n + 1$ to $0$ and decrease for $k$ from $0$ to $n$. We also prove that the group algebra factorization is equivalent to the factorization:
\begin{equation}
    M = \begin{pmatrix}
        \frac{E}{2\sqrt{n}} + \frac{C}{2} & \frac{E}{2\sqrt{n}} - \frac{C}{2}
    \end{pmatrix} \times \begin{pmatrix}
        \frac{E}{2\sqrt{n}} + \frac{C}{2}\\ \frac{E}{2\sqrt{n}} - \frac{C}{2}
    \end{pmatrix},
\end{equation}
where $E$ is an $n \times n$ matrix of all ones, and
\begin{equation}
    C = (2M - E)^{1/2} = \begin{pmatrix}
        1 &-1 & \dots &-1 &-1\\
        1 &1 & \dots & -1&-1\\
        \vdots &\vdots & \ddots &\vdots &\vdots\\
        1 & 1 &\dots & 1 & 1
    \end{pmatrix}^{1/2}.
\end{equation}

Using these insights, we address a key limitation of the group algebra factorization: \textbf{memory efficiency}. The binning method introduced in Andersson and Pagh~\cite{andersson2024improved} cannot be directly applied to group algebra factorization, as the matrices $L$ and $R$ are non-square, contain negative values, and do not satisfy the monotone ratio condition required in their work. Therefore, we propose a novel adaptation of the binning approach tailored for group algebra factorization. Specifically, we prove that group algebra  factorization can be made memory- and time-efficient, which is stated by our main theorem:

\begin{restatable}{theorem}{maintheorem}(Main Theorem)
\label{thm:main_theorem}
    Given a group algebra factorization $M = L \times R$, where $L = (L_1, L_2) \in \mathbb{R}^{n\times 2n}$ and $R = (R_1, R_2)^T \in \mathbb{R}^{2n \times n}$, we can find a binned matrix $\hat{L} = (\hat{L}_1, \hat{L}_2)$ and a corresponding matrix $\hat{R} = (\hat{L}_1^{-1}L_1R_1, \hat{L}_2^{-1}L_2R_2)^T$ such that, for any $0 < \zeta \le 1$,
    \begin{equation}
    \text{MeanSE}(\hat{L}, \hat{R}) \le \text{MeanSE}(L, R) (1 + \zeta) \quad \text{and} \quad \text{MaxSE}(\hat{L}, \hat{R}) \le \text{MaxSE}(L, R) (1 + \zeta).
    \end{equation}

   Moreover, the product of the matrix $\hat{L}$ with a vector $z \in \mathbb{R}^{2n}$ can be computed in time $O_{\zeta}(\sqrt{n}(\log n)^{3/2})$ per row. The factorization matrix $\hat{L}$  requires $O_{\zeta}(\sqrt{n}(\log n)^{3/2})$ memory for storage and can be constructed efficiently in $O(n)$ time.
\end{restatable}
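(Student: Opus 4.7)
My strategy rests on three properties established earlier. First, the blocks $L_1,L_2$ are $n\times n$ Toeplitz matrices because $L_{i,j} = b_f(\omega^{j-i})$ depends only on $j-i$, so one representative row's bin structure propagates to every row by shifting. Second, the entries $b_f(\omega^k)$ are real and unimodal on each of the two monotone intervals, with the sharp upper and lower bounds produced by the alternating-sum representation. Third, defining $\hat{R}_j := \hat{L}_j^{-1} L_j R_j$ yields $\hat{L}\hat{R} = L_1 R_1 + L_2 R_2 = M$ identically whenever the square blocks $\hat{L}_j$ are invertible, so the binned mechanism remains unbiased and the task reduces to controlling a handful of norms of $\hat{L}$ and $\hat{R}$.

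The binning itself partitions the columns of a single representative row on each of its monotone halves into contiguous intervals chosen so that the ratio of extreme values inside each interval is at most $1+\epsilon$, for a precision $\epsilon = \Theta(\zeta/\log n)$, and replaces every entry in the interval by a common representative. Using the quantitative bounds on $|b_f(\omega^k)|$ established in the structural section (which show that the magnitudes decay like $1/\sqrt{|k|}$ away from the endpoints, down to the floor $1/(2\sqrt{n})$), the bin sizes grow geometrically and a careful count gives $B = O_\zeta\bigl(\sqrt{n}(\log n)^{3/2}\bigr)$ bins per row. Because of the Toeplitz structure the same pattern is reused in every row, yielding the claimed $O_\zeta(\sqrt{n}(\log n)^{3/2})$ storage for the whole matrix.

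For the error guarantee I handle the $L$-side and $R$-side norms separately. Entrywise multiplicative closeness immediately gives $\|\hat{L}\|_F \le (1+\epsilon)\|L\|_F$ and $\|\hat{L}\|_{2\to\infty} \le (1+\epsilon)\|L\|_{2\to\infty}$. The main obstacle is bounding $\|\hat{R}\|_{1\to 2}$, i.e.\ the column norms of $\hat{L}_j^{-1} L_j R_j$, because the binning here is not of the monotone-ratio type used by Andersson and Pagh and the inverse $\hat{L}_j^{-1}$ can in principle amplify perturbations. Writing $\hat{L}_j = L_j + \Delta_j$ with an entrywise multiplicative bound on $\Delta_j$, a Neumann-series expansion gives $\hat{L}_j^{-1} L_j = I + O\bigl(\|L_j^{-1}\Delta_j\|_{\mathrm{op}}\bigr)$; I then bound $\|L_j^{-1}\|_{\mathrm{op}}$ through the lower bound $b_f(\omega^{-t}) \ge \tfrac{1}{2\sqrt{n}}$ on the Fourier symbols of the Toeplitz/circulant embedding and conclude $\|\hat{R}_j e_i\|_2 \le (1+\zeta)\|R_j e_i\|_2$ after calibrating $\epsilon$. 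The interplay between the $\Theta(\sqrt{n})$-scale conditioning of $L_j$ and the bin precision is precisely what forces the extra polylogarithmic factors in $B$ and is the most delicate part of the argument.

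Construction time is $O(n)$: the row values are computed by truncating the alternating sum at its $O(\log n)$-term tail (the binomial bounds justify dropping the remainder), and a single linear scan assembles the bins. To multiply $\hat{L}$ by a vector $z \in \mathbb{R}^{2n}$, one precomputes the prefix sums of $z$ in $O(n)$ preprocessing, after which each row-vector product is a sum of $B$ bin contributions, each evaluable in $O(1)$ as a difference of two prefix sums scaled by the representative of that bin, yielding $O(B) = O_\zeta(\sqrt{n}(\log n)^{3/2})$ time per row as claimed.
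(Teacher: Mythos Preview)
Your overall architecture matches the paper's: bin a single representative row, push it through all rows by the Toeplitz/circulant structure, keep the factorization exact via $\hat{R}_j = \hat{L}_j^{-1}L_jR_j$, and control the blow-up of $\|\hat{R}\|_{1\to 2}$ by a Neumann series in $L_j^{-1}\Delta_j$. The genuine gap is the step you flag as ``most delicate'': your bound on $\|L_j^{-1}\|_{\mathrm{op}}$.

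You propose to control $\|L_j^{-1}\|_{\mathrm{op}}$ ``through the lower bound $b_f(\omega^{-t}) \ge \tfrac{1}{2\sqrt{n}}$ on the Fourier symbols of the Toeplitz/circulant embedding.'' This does not work, for two independent reasons. First, the square blocks $L_1,L_2$ are Toeplitz but \emph{not} circulant (only the full $n\times 2n$ matrix $L$ is row-circulant), so there is no direct diagonalization by characters and no ``symbol'' whose pointwise lower bound would control $\|L_j^{-1}\|_2$. Second, the claimed pointwise lower bound is false on the second block: by Lemma~\ref{lem:L_last_row} the entries $b_f(\omega^k)$ for $k\in\{1,\dots,n\}$ descend through zero and reach values near $-1/\sqrt{2}$, so $L_2$ contains genuinely negative entries and no positive floor exists. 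Even if one could salvage a bound of the shape $\|L_j^{-1}\|_2 = O(\sqrt{n})$ (your ``$\Theta(\sqrt{n})$-scale conditioning''), plugging this into the perturbation inequality forces $\eta = \Theta_\zeta(1/(n\sqrt{\log n}))$ to keep $\|L_j^{-1}\Delta_j\|_2$ small, and then the bin count $\log(1/\mu)/\log(1+2\eta)$ becomes $\Theta_\zeta(n(\log n)^{3/2})$, not $\Theta_\zeta(\sqrt{n}(\log n)^{3/2})$. Your stated choice $\epsilon = \Theta(\zeta/\log n)$ is inconsistent with both the claimed bin count and any Neumann-series control under a $\sqrt{n}$ condition number.

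What the paper actually does here is prove a \emph{constant} bound $\|L_j^{-1}\|_2 \le 250$ (Lemma~\ref{lem:inv_norm_bound}). The mechanism is not a symbol argument at all: one uses the structural identity $L_j = \tfrac{E}{2\sqrt{n}} \pm \tfrac{C}{2}$ with $C = (2M-E)^{1/2}$ (Lemma~\ref{lem:b_f_matrix}), recognizes $E = ee^T$ as a rank-one perturbation, applies Sherman--Morrison, and then bounds $\|C^{-1}\|_2$, $\|C^{-1}e\|_2\|C^{-T}e\|_2$, and $|e^TC^{-1}e - \sqrt{n}|$ separately via the explicit series $C^{-1} = \tfrac{1}{\sqrt{2}}\sum_k \tilde f_k A^k$. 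With $\|L_j^{-1}\|_2 = O(1)$ in hand, Theorem~\ref{thm:matrix_perturbation} gives the correct calibration $\eta = \Theta(\zeta/\sqrt{n\log n})$, $\mu = \Theta(\zeta/n)$, and Lemma~\ref{lem:bin_algorithm} then yields the $O_\zeta(\sqrt{n}(\log n)^{3/2})$ bin count. This constant spectral bound is the missing idea in your proposal; without it the quantitative conclusion does not follow.
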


\begin{remark}[Difference from Related Works]
Our work builds on the combination of Group Algebra factorization \cite{group_algebra} and Continual Counting via Binning \cite{andersson2024streaming}. The difference from the former is that we make it memory- and time-efficient, while the difference from the latter is that they use a Square Root factorization, which results in worse accuracy compared to Group Algebra factorization.
\end{remark}

\section{Preliminaries}

Recent work by Henzinger and Upadhyay 2024 \cite{group_algebra} presented the following factorization of the matrix $M = L \times R$, where: 
\begin{align}
\label{eq:group_algebra_factorization}
    L =  \begin{pmatrix}
    b_f(\omega^0) &  \dots & b_f(\omega^{2n - 1})\\
    \vdots  & \ddots & \vdots\\
    b_f(\omega^{-n + 1})  & \dots & b_f(\omega^{n})\\
\end{pmatrix} \in \mathbb{C}^{n \times 2n}, \quad 
R = \begin{pmatrix}
    b_f(\omega^0)  & \dots & b_f(\omega^{n - 1})\\
    \vdots  & \ddots & \vdots\\
    b_f(\omega^{-2n + 1})  & \dots & b_f(\omega^{-n}) \\
\end{pmatrix} \in \mathbb{C}^{2n \times n},
\end{align}
with the values $b_f(\omega^k)$ given by the function: 
\begin{equation}
        b_f(x) = \frac{1}{2n} \sum\limits_{l = 0}^{2n - 1} x^l \left(\sum\limits_{k = 0}^{n - 1} \omega^{kl} \right)^{1/2},
    \end{equation}
where $\omega = e^{i\pi/n}$ a $2n$-th root of unity. The values $b_f(\omega^{k})$ are $2n$-periodic, which makes the matrices $L$ and $R$ row- and column-\textbf{circulant}, respectively. 
The factorization quality is expressed by the following theorem:

\begin{theorem}[Theorem 1.1 from \cite{group_algebra}]
    The group algebra factorization $M^{n \times n} = L^{n \times 2n} \times R^{2n \times n}$, where $L_{i,j} = R_{i,j} = b_f(\omega^{j - i})$, satisfies  
    \begin{equation}
        \text{MaxSE}(L, R) \le \frac{1}{2} + \frac{1}{2n} \sum\limits_{l = 1}^{n} \frac{1}{\sin\left(\frac{\pi(2l - 1)}{2n}\right)} < 1 + \frac{\log n}{\pi}.
    \end{equation}
\end{theorem}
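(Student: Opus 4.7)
My plan is to construct $\hat L_1, \hat L_2$ by a row-wise piecewise-constant quantization of $L_1, L_2$, then verify the three claims (accuracy, per-row multiplication time, storage and construction cost) in turn, relying crucially on the structural results announced earlier in the contribution (realness, unimodality of the row of $b_f$-values, pointwise upper and lower bounds, and the equivalent block formula involving $C=(2M-E)^{1/2}$).

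Setup and bin construction. By the Toeplitz structure of $L$, each row of $L_1$ (and of $L_2$) is, up to a cyclic shift indexed by the row, equal to the common sequence $v_k = b_f(\omega^{-k})$. The announced structural results tell us that $v_k$ is real, increases on $[-n+1,0]$ and decreases on $[0,n]$, and admits pointwise upper and lower bounds from the alternating-binomial series $v_k = \frac{1}{2\sqrt n} + \frac{1}{\sqrt 2}\sum_{l \ge 0} |\binom{-1/2}{k+nl}|(-1)^l$, which in particular gives $v_k \sim 1/\sqrt{2\pi k}$ in the interior. On each monotone arm we partition the index range into $B$ consecutive bins so that the values inside each bin are within a multiplicative factor $1+\eta$, where $\eta = \Theta(\zeta/\log n)$. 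Combining the geometric bin-scaling where $v_k$ decays like $1/\sqrt k$ with the fine bucketing required near the peak, the total bin count is $B = O_\zeta(\sqrt n\,(\log n)^{3/2})$. Replacing all entries inside a bin by a common representative (say, the geometric mean) defines $\hat L_1, \hat L_2$, and setting $\hat R = (\hat L_1^{-1}L_1 R_1, \hat L_2^{-1} L_2 R_2)^T$ preserves $\hat L\hat R = M$ exactly.

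Error analysis. Since $\hat L_i$ and $L_i$ agree entrywise up to relative factor $1+\eta$, the output factor norms inflate only slightly: $\|\hat L\|_F \le (1+\eta)\|L\|_F$ and $\|\hat L\|_{2\to\infty} \le (1+\eta)\|L\|_{2\to\infty}$. The delicate point is bounding $\|\hat R\|_{1\to 2}$. Writing $E_i = \hat L_i - L_i$, we have $\hat L_i^{-1} L_i = I - \hat L_i^{-1} E_i$, and a careful perturbation argument will yield $\|\hat L_i^{-1} E_i\|_{\mathrm{op}} = O(\eta \log n)$, so that $\|\hat R\|_{1\to 2} \le (1+O(\eta\log n))\|R\|_{1\to 2}$. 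Choosing $\eta = c\zeta/\log n$ for a suitable absolute constant $c$ gives both required $(1+\zeta)$ bounds on MaxSE and MeanSE.

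Complexity. By the Toeplitz structure, the bin pattern is the same for every row up to a cyclic shift by the row index, so storing the $O(B)$ bin boundaries and representative values suffices, yielding $O_\zeta(\sqrt n\,(\log n)^{3/2})$ memory; the construction scans the $n$ entries of a single reference row once, in $O(n)$ time. For a row of $\hat L$ multiplied by $z\in\mathbb R^{2n}$, we precompute the prefix sums of $z$ in $O(n)$ time once, and each row, being piecewise constant with $B$ pieces, is then evaluated in $O(B) = O_\zeta(\sqrt n\,(\log n)^{3/2})$ time via $B$ constant-time range-sum queries.

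The main obstacle will be the perturbation estimate $\|\hat L_i^{-1} E_i\|_{\mathrm{op}} = O(\eta\log n)$. Since $L_i$ is a Toeplitz matrix whose entries decay like $1/\sqrt{|k|}$, its smallest singular values are not uniformly bounded away from zero, so a naive Neumann bound via $\kappa(L_i)$ would force $\eta$ to scale with $1/\sqrt n$ and destroy the $\sqrt n$ bin count. I plan to bypass this by working in the equivalent block factorization announced in the contribution, where $L_i$ appears as a rank-one perturbation $\tfrac{E}{2\sqrt n} \pm \tfrac{C}{2}$ of a half of $C=(2M-E)^{1/2}$; the spectrum of $C$ is controlled by $C^2 = 2M-E$, which is explicit and well-conditioned outside a rank-one subspace, so the perturbation analysis can be carried out on $C$ directly and then lifted back, with the rank-one correction handled separately. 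Combining this spectral control with the Toeplitz bin error should produce the required log-factor bound and close the argument.
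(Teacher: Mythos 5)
There is a genuine gap: your proposal proves the wrong statement. The theorem in question is the error bound for the (unbinned) group algebra factorization itself, namely that $\text{MaxSE}(L,R)=\|L\|_{2\to\infty}\|R\|_{1\to 2}$ for the matrices with entries $b_f(\omega^{j-i})$ is at most $\frac{1}{2}+\frac{1}{2n}\sum_{l=1}^{n}\frac{1}{\sin\left(\frac{\pi(2l-1)}{2n}\right)}<1+\frac{\log n}{\pi}$. Your outline instead sketches the paper's main binning theorem: construction of $\hat{L},\hat{R}$, the $(1+\zeta)$ error preservation, the $O_{\zeta}(\sqrt{n}(\log n)^{3/2})$ bin count, and the time/memory claims. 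Even if every step of that sketch were carried out, it would say nothing about $\text{MaxSE}(L,R)$: the binning analysis only relates $\text{MaxSE}(\hat{L},\hat{R})$ to $\text{MaxSE}(L,R)$ multiplicatively, so it presupposes, rather than establishes, a bound on the quantity the statement is about.

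What the statement actually requires is a computation (or bound) of the common row norm of $L$ and column norm of $R$. Since the matrices are circulant, every row of $L$ and column of $R$ has squared norm $\sum_{k=0}^{2n-1} b_f^2(\omega^{-k})$; expanding $b_f$ via the geometric-sum form $\bigl(\frac{1-\omega^{nl}}{1-\omega^{l}}\bigr)^{1/2}$, only odd $l$ survive, and orthogonality of the $2n$-th roots of unity kills all cross terms except those with $l_1+l_2+1=n$, leaving exactly $\frac{1}{2}+\frac{1}{2n}\sum_{l=1}^{n}\frac{1}{\sin\left(\frac{\pi}{2n}(2l-1)\right)}$; the final logarithmic bound is then the standard estimate on the sum of reciprocal sines. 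This is precisely how the paper handles it (its lemma on the matrix norms shows the expression is in fact an equality for $\|L\|_{2\to\infty}^2=\|R\|_{1\to 2}^2$). Nothing of this kind — no evaluation of $\sum_k b_f^2(\omega^{-k})$, nor any alternative control of the product $\|L\|_{2\to\infty}\|R\|_{1\to 2}$ — appears in your proposal, so the claimed inequality is never addressed.
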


Later, we relate this factorization to the Square Root factorization \cite{Henzinger}, given by \( M^{n\times n} = C^{n\times n} \times C^{n\times n} \), where  

\begin{equation}
    C := M^{1/2} =  \begin{pmatrix}
        1 &0 & \dots & 0\\
        f_1 & 1 & \dots & 0\\
        \vdots & \vdots &\ddots & \vdots\\
        f_{n - 1} & f_{n - 2} & \dots & 1
    \end{pmatrix},
\end{equation}  

with \( f_k = |\binom{-1/2}{k}| \). The coefficients \( f_k \) have the generating function \( \sum_{k=0}^{\infty} f_k x^k = (1 - x)^{-1/2} \) and satisfy the bounds \cite{dvijotham2024efficient}:

\begin{equation}
    \frac{1}{\sqrt{\pi(k + 1)}}\le f_k \le \frac{1}{\sqrt{\pi k}}.
\end{equation}  

For binning as introduced by Andersson and  Pagh \cite{andersson2024streaming}, we need the following definitions, adjusted for non-square matrices.

\begin{definition}[$(\eta, \mu)$-perturbation]
For matrices $L, P \in \mathbb{R}^{n \times m}$ and $\eta, \mu \geq 0$, $L + P$ is an $(\eta, \mu)$-perturbation of $L$ if $|P_{i,j}| \leq \eta |L_{i,j}| + \mu$ for all $(i, j) \in [n] \times [m]$.
\end{definition}

We consider a more flexible definition of the binned matrix than that of \cite{andersson2024streaming}, without imposing restrictions on how partitions vary between rows:  
\begin{definition}[Binned Matrix]  
A \textit{binned matrix} is a matrix in which each row is assigned a binning, meaning that the elements of a row are partitioned into $k \geq 1$ disjoint subsegments, such that the entries of each have the same value.  
\end{definition}

\section{Results}
\subsection{Group Algebra Properties}

In this subsection, we establish several key properties of the group algebra factorization. The results build on each other, leading to insights into the structure and behavior of the factorization. Proofs for all lemmas are provided in the Appendix.

We begin by showing that the factorization is real-valued.

\begin{lemma}
[Factorization is real-valued]
\label{lem:b_f_omega_is_real}
The values
\begin{equation}
b_f(\omega^t) = \frac{1}{2n} \sum_{l = 0}^{2n - 1} \omega^{tl} \left( \sum_{k = 0}^{n - 1} \omega^{kl} \right)^{1/2} \in \mathbb{R}
\end{equation}
are real for all $t \in \mathbb{Z}$, where $\omega = e^{i\pi/n}$ is a $2n$-th root of unity.
\end{lemma}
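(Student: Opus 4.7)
My plan is to decompose the defining sum for $b_f(\omega^t)$ into complex-conjugate pairs by matching the summation index $l$ with $2n-l$. Writing $S_l := \sum_{k=0}^{n-1}\omega^{kl}$, a standard geometric-series calculation gives $S_0 = n$, $S_l = 0$ for nonzero even $l$, and $S_l = -2/(\omega^l - 1)$ for odd $l$. Using $\overline{\omega} = \omega^{-1}$ and the $2n$-periodicity of the exponents of $\omega$, one has $\overline{S_l} = S_{-l} = S_{2n-l}$ and $\overline{\omega^{tl}} = \omega^{-tl} = \omega^{t(2n-l)}$.

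The involution $l \mapsto 2n-l$ on $\{0,1,\ldots,2n-1\}$ has exactly two fixed points, $l=0$ and $l=n$, so I would treat these separately. The $l=0$ term contributes $\sqrt{n}\in\mathbb{R}$. For $l=n$, the explicit formula gives $S_n \in \{0,1\}$ (depending on the parity of $n$), and $\omega^{tn} = (-1)^t$, so this term is also real. All remaining indices split into disjoint pairs $\{l, 2n-l\}$, and for each pair I would like to conclude that the two contributions $\omega^{tl}\sqrt{S_l}$ and $\omega^{t(2n-l)}\sqrt{S_{2n-l}}$ are complex conjugates, so that their sum equals $2\operatorname{Re}(\omega^{tl}\sqrt{S_l})\in\mathbb{R}$.

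The sole delicate point --- and what I expect to be the main obstacle --- is that the desired identity $\sqrt{\overline{S_l}} = \overline{\sqrt{S_l}}$ under the principal branch fails exactly when $S_l$ lies on the negative real axis. To rule this out, I would use the explicit formula $S_l = -2/(\omega^l - 1)$ for odd $l$: the quantity $S_l$ is real iff $\omega^l - 1\in\mathbb{R}$ iff $\omega^l\in\{\pm 1\}$, and for $l\in[1,2n-1]\setminus\{n\}$ both cases are impossible. Hence $S_l$ has a nonzero imaginary part for every paired index, the principal square root commutes with conjugation there, and combining with the real contributions of the fixed points $l=0$ and $l=n$ yields $b_f(\omega^t)\in\mathbb{R}$, as claimed.
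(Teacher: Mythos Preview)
Your proof is correct and follows essentially the same approach as the paper: pair the summands at $l$ and $2n-l$ so they become complex conjugates, handle the exceptional indices $l=0$ and $l=n$ directly, and verify via the closed form $S_l = 2/(1-\omega^l)$ that the inner sum is never a negative real number so that the principal square root commutes with conjugation. Your treatment of the boundary cases (noting $S_n\in\{0,1\}$ and that even $l\neq 0$ give $S_l=0$) is in fact slightly cleaner than the paper's.
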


The proof follows from summing values for $l$ and $2n - 1 - l$, with imaginary components canceling each other. Next, we derive an explicit decomposition for $b_f(\omega^{-t})$.

\begin{lemma}
[Closed-form expression of factors]
\label{lem:b_f_decomposition}
The values $b_f(\omega^{-t})$ are given by:
\begin{equation*}
b_f(\omega^{-t}) = \frac{1}{2\sqrt{n}} + \frac{1}{\sqrt{2}} \sum_{l = 0}^\infty \left|\binom{-1/2}{t + nl}\right| (-1)^{l},
\end{equation*}
for $t \in [-n, n - 1]$, where $\binom{-1/2}{t + nl} = 0$ for $t + nl < 0$.
\end{lemma}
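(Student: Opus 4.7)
My plan is to prove the identity by first collapsing the inner geometric sum in the definition of $b_f$, then invoking the generating-function identity $(1-x)^{-1/2} = \sum_{k \ge 0} f_k x^k$ (with $f_k = |\binom{-1/2}{k}|$), and finally extracting the aliased coefficients via roots-of-unity filtering followed by an Abel limit.

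\textbf{Step 1 (reducing the definition).} I would first evaluate $g(\omega^l) := \sum_{k=0}^{n-1}\omega^{kl} = (1-(-1)^l)/(1-\omega^l)$ for $l \ne 0$, which is $0$ when $l$ is even and $2/(1-\omega^l)$ when $l$ is odd; for $l=0$ it equals $n$. The term $l=0$ contributes $\frac{1}{2\sqrt{n}}$ to $b_f(\omega^{-t})$, the even $l \ne 0$ terms vanish, and using the principal branch of the square root (checking that $\arg(1-\omega^l) \in (-\pi/2,\pi/2)$ so $(2/(1-\omega^l))^{1/2} = \sqrt{2}\,(1-\omega^l)^{-1/2}$) I get
\begin{equation*}
b_f(\omega^{-t}) \;=\; \frac{1}{2\sqrt{n}} \;+\; \frac{\sqrt{2}}{2n}\sum_{\substack{l \in [1,2n-1] \\ l\text{ odd}}} \omega^{-tl}\,(1-\omega^l)^{-1/2}.
\end{equation*}

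\textbf{Step 2 (Abel regularization and interchange).} For $0<r<1$, the series $\sum_{k\ge 0} f_k (r\omega^l)^k$ converges absolutely to $(1-r\omega^l)^{-1/2}$ (principal branch). Substituting this into the sum over odd $l$ is legitimate because the $l$-sum is finite. After swapping the order of summation I obtain
\begin{equation*}
\sum_{l\text{ odd}} \omega^{-tl}(1-r\omega^l)^{-1/2} \;=\; \sum_{k \ge 0} f_k r^k \sum_{l\text{ odd}} \omega^{(k-t)l}.
\end{equation*}

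\textbf{Step 3 (roots-of-unity filtering).} Writing $l=2m+1$, the inner sum becomes $\omega^{k-t}\sum_{m=0}^{n-1}(\omega^2)^{(k-t)m}$. Since $\omega^2$ is a primitive $n$-th root of unity, this is $n\,\omega^{k-t}$ if $k\equiv t\pmod{n}$ and $0$ otherwise. Writing $k=t+nj$ with $t+nj\ge 0$, the prefactor is $\omega^{nj}=(-1)^j$, so
\begin{equation*}
\sum_{l\text{ odd}} \omega^{-tl}(1-r\omega^l)^{-1/2} \;=\; n\sum_{\substack{j \ge 0 \\ t+nj \ge 0}} (-1)^j\, f_{t+nj}\, r^{t+nj}.
\end{equation*}

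\textbf{Step 4 (Abel limit).} Now I would send $r \to 1^-$. The left side is continuous at $r=1$ (each odd $l$ gives $\omega^l \ne 1$), so it tends to $\sum_{l\text{ odd}} \omega^{-tl}(1-\omega^l)^{-1/2}$. For the right side, the series $\sum_{j\ge 0}(-1)^j f_{t+nj}$ converges by the alternating series test, using that $f_k$ is eventually decreasing and bounded by $1/\sqrt{\pi k}$; Abel's theorem then gives $\lim_{r\to 1^-} \sum_j (-1)^j f_{t+nj} r^{nj} = \sum_j (-1)^j f_{t+nj}$. Combining these with the expression from Step 1, the prefactor $\sqrt{2}/(2n)\cdot n = 1/\sqrt{2}$ yields exactly the claimed formula. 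The main technical obstacle is the careful handling of the Abel limit and the matching of principal square-root branches in Step 1, since the infinite series $\sum_k f_k \omega^{kl}$ is only conditionally convergent; everything else is bookkeeping with geometric sums.
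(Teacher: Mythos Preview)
Your proof is correct and follows essentially the same route as the paper: both evaluate the inner geometric sum, restrict to odd indices, expand $(1-\omega^l)^{-1/2}$ via the binomial series, and apply roots-of-unity filtering to extract the aliased coefficients. The only minor difference is that you justify convergence via Abel regularization (working with $r<1$ and passing to the limit), whereas the paper invokes Abel's test directly on the unit circle before interchanging the finite $l$-sum with the $k$-series; both are valid and equivalent.
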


%
Using this decomposition, we identify the monotonicity regions of the coefficient $b_f(\omega^{-t})$.

\begin{lemma}
[Monotonicity Property]
\label{lem:b_f_decreasing}
The values $b_f(\omega^{-t})$ increase for $t$ from $-n$ to $0$ and decrease for $t$ from $0$ to $n - 1$. By periodicity, this implies that $b_f(\omega^{-t})$ also increases for $t$ from $n$ to $2n$ and decreases for $t$ from $-2n$ to $-n - 1$.
\end{lemma}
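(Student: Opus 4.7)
My plan is to work directly from the closed-form decomposition of Lemma~\ref{lem:b_f_decomposition}. Writing $f_k := |\binom{-1/2}{k}|$ (with the convention $f_k = 0$ for $k < 0$), the first difference takes the clean form
\begin{equation*}
\Delta_t \;:=\; b_f(\omega^{-(t+1)}) - b_f(\omega^{-t}) \;=\; \frac{1}{\sqrt{2}} \sum_{l=0}^{\infty} \bigl(f_{t+1+nl} - f_{t+nl}\bigr)(-1)^l,
\end{equation*}
so it suffices to show $\Delta_t < 0$ for $t \in \{0,\dots,n-2\}$ and $\Delta_t > 0$ for $t \in \{-n,\dots,-1\}$. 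The extension to the shifted intervals in the statement follows at once from the $2n$-periodicity of $b_f(\omega^{-t})$ recorded in the preliminaries.

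The entire argument will rest on two elementary properties of the sequence $(f_k)_{k \ge 0}$, both deducible from the ratio identity $f_{k+1}/f_k = (2k+1)/(2k+2)$: (i) strict positivity and strict decrease, and (ii) discrete convexity, i.e.\ the sequence $f_k - f_{k+1}$ is itself strictly decreasing in $k$. Property (ii) reduces to the inequality $(2k+1)(k+1) < (2k+2)(k+2)$, which expands to $3k+3 > 0$. The same ratio shows $f_k - f_{k+1} = f_k/(2k+2) = O(k^{-3/2})$, so absolute convergence of every alternating sum below is not an issue.

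For $t \in \{0,\dots,n-2\}$ every index $t+nl$ is nonnegative, so each $a_l := f_{t+nl} - f_{t+1+nl}$ is strictly positive by (i), and the sequence $(a_l)_{l\ge 0}$ is strictly decreasing in $l$ by (ii) (the argument advances in steps of $n$, and the monotonicity of $f_k - f_{k+1}$ is preserved under that sub-sampling). Leibniz's alternating-series test then gives $\sum_{l\ge 0} a_l(-1)^l > 0$, hence $\Delta_t = -\tfrac{1}{\sqrt{2}}\sum_{l\ge 0} a_l (-1)^l < 0$. For $t \in \{-n,\dots,-2\}$ both $t$ and $t+1$ are negative, so the $l=0$ summand vanishes, and the identical convexity argument applied to $(a_l)_{l\ge 1}$ gives $\Delta_t > 0$. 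The only boundary case is $t = -1$: the $l=0$ term contributes $f_0 - f_{-1} = 1$ with a positive sign, while the tail $\sum_{l \ge 1}(f_{nl} - f_{nl-1})(-1)^l$ is itself positive by the same alternating-series argument, so $\Delta_{-1} > 0$.

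The main delicate step is property (ii): alternating-series monotonicity is sharp only when the magnitudes of the terms are themselves monotone, so if convexity of $f_k$ failed at any single index the corresponding transition $\Delta_t$ could in principle have the wrong sign. Fortunately (ii) holds uniformly for all $k \ge 0$ from the one-line ratio computation above, so once the reduction is in place the rest of the proof amounts to bookkeeping of the three cases $t \in \{0,\dots,n-2\}$, $t \in \{-n,\dots,-2\}$, and $t=-1$.
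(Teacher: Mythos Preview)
Your proposal is correct and follows essentially the same approach as the paper: compute the first difference from the decomposition of Lemma~\ref{lem:b_f_decomposition}, reduce to an alternating series in the quantities $f_{t+nl}-f_{t+1+nl}$, and conclude the sign from the monotonicity of $g_k=f_k-f_{k+1}$ (discrete convexity of $f_k$); the case split $t\ge 0$, $t\le -2$, $t=-1$ is identical to the paper's. The only cosmetic difference is that the paper verifies convexity by expanding $f_k-2f_{k+1}+f_{k+2}=\tfrac{3f_k}{4(k+1)(k+2)}$, whereas you extract it directly from the ratio $f_{k+1}/f_k=(2k+1)/(2k+2)$.
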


We further prove upper and lower bounds for $b_f(\omega^{-t})$.

\begin{lemma}
\label{lem:b_f_bounds}
The following inequalities hold:
\begin{equation*}
\frac{f_t}{\sqrt{2}} - \frac{f_{t + n}}{\sqrt{2}} + \frac{4}{7\sqrt{n}} - \frac{1}{8n^{3/2}} \leq b_f(\omega^{-t}) \leq \frac{f_t}{\sqrt{2}} - \frac{f_{t + n}}{\sqrt{2}} + \frac{6}{7\sqrt{n}} + \frac{3}{8n^{3/2}},
\end{equation*}
for $t \in [-n, n - 1]$, where $f_t = 0$ for $t < 0$.
\end{lemma}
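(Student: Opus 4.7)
The plan is to use the explicit formula from Lemma~\ref{lem:b_f_decomposition} to reduce the desired bounds to the control of an alternating tail sum, and then to estimate this tail via an integral representation. By Lemma~\ref{lem:b_f_decomposition}, for $t\in[-n,n-1]$,
\[
b_f(\omega^{-t}) \;=\; \frac{f_t - f_{t+n}}{\sqrt{2}} \;+\; \frac{1}{2\sqrt{n}} \;+\; \frac{T}{\sqrt{2}}, \qquad T \;:=\; \sum_{l \ge 2}(-1)^l f_{t+nl},
\]
so the claim reduces to showing $\frac{1}{14\sqrt{n}} - \frac{1}{8n^{3/2}} \le \frac{T}{\sqrt{2}} \le \frac{5}{14\sqrt{n}} + \frac{3}{8n^{3/2}}$. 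A direct alternating-series argument gives only the crude bounds $T \in [f_{t+2n} - f_{t+3n},\, f_{t+2n}]$; combined with $f_k \in [\tfrac{1}{\sqrt{\pi(k+1)}},\tfrac{1}{\sqrt{\pi k}}]$ these are not tight enough at the corners $t = -n$ and $t = n-1$, so a finer estimate is required.

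For this, I would use the Beta-function representation $f_k = \tfrac{1}{\pi}\int_0^1 w^{k-1/2}(1-w)^{-1/2}\,dw$, sum the geometric series $(1+w^n)^{-1} = \sum_{j\ge 0}(-w^n)^j$, and interchange summation and integration to obtain
\[
T \;=\; \frac{1}{\pi}\int_0^1 \frac{w^{t+2n-1/2}}{(1+w^n)\sqrt{1-w}}\,dw.
\]
Applying the Laplace-type substitution $v = n(1-w)$ and writing $c := 2 + t/n \in [1,3)$, the Taylor expansions $(1-v/n)^n = e^{-v}(1+O(v^2/n))$ and $(1-v/n)^{t+2n-1/2} = e^{-cv}(1+O(v/n))$ yield
\[
T \;=\; \frac{S(c)}{\sqrt{\pi n}} + O(n^{-3/2}), \qquad S(c) \;:=\; \sum_{l\ge 0}\frac{(-1)^l}{\sqrt{l+c}},
\]
where the identification $\int_0^\infty \tfrac{e^{-cv}}{(1+e^{-v})\sqrt v}\, dv = \sqrt{\pi}\,S(c)$ follows from $\int_0^\infty e^{-sv}v^{-1/2}\,dv = \sqrt{\pi/s}$ together with the geometric expansion of $(1+e^{-v})^{-1}$.

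Finally, pairing consecutive terms of $S(c)$ into positive pairs and comparing the resulting sums with integrals gives $\tfrac{1}{2\sqrt{c+1}} \le S(c) \le \tfrac{1}{\sqrt{c}} - \tfrac{1}{2\sqrt{c+2}}$, which over $c\in[1,3]$ places $S(c)/\sqrt{2\pi}$ inside the numerical interval $[\tfrac{1}{4\sqrt{2\pi}},\, \tfrac{1}{\sqrt{2\pi}}(1-\tfrac{1}{2\sqrt{3}})] \approx [0.0997,\, 0.284]$, strictly inside the target range $[\tfrac{1}{14},\tfrac{5}{14}] \approx [0.0714,\, 0.357]$. The remaining gap between these two intervals is exactly what absorbs the $O(n^{-3/2})$ Laplace error into the $\tfrac{1}{8n^{3/2}}$ and $\tfrac{3}{8n^{3/2}}$ corrections appearing in the statement.

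The main obstacle I expect is making the Laplace approximation quantitative and uniform in $t$: this requires explicit second-order Taylor bounds on $(1-v/n)^n$ and on $(1-v/n)^{t+2n-1/2}$, together with a careful split into a bulk region (where the Taylor expansion is valid) and a tail region (where the integrand decays exponentially and contributes a negligible amount). Additional care is needed at small $n$, where the asymptotic analysis becomes unreliable and one may have to verify the inequality by direct computation.
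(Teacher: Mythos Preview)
Your approach is sound in outline and genuinely different from the paper's. Where you pass through the Beta-integral representation $f_k=\tfrac{1}{\pi}\int_0^1 w^{k-1/2}(1-w)^{-1/2}\,dw$, sum the geometric series to a closed-form integral for $T$, and then run a Laplace analysis to extract the leading term $S(c)/\sqrt{\pi n}$, the paper stays completely elementary. It pairs the tail as $\sum_{l\ge 1}\big(f_{t+2nl}-f_{t+n(2l+1)}\big)$, inserts the pointwise bounds $\tfrac{1}{\sqrt{\pi(k+1)}}\le f_k\le\tfrac{1}{\sqrt{\pi k}}$, rewrites each difference via $\tfrac{1}{\sqrt{a}}-\tfrac{1}{\sqrt{b}}\le\tfrac{b-a}{2a^{3/2}}$ (and the analogous lower bound), and reduces everything to $\sum_{l\ge 1}\big(l+\tfrac{t}{2n}\big)^{-3/2}$, which is bounded at the endpoints $t=-n$ and $t=n-1$ by the explicit values $(2\sqrt{2}-1)\zeta(3/2)$ and $\zeta(3/2)-1$. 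The constants $0.837$ and $0.614$ then drop out directly, valid for \emph{all} $n$, with no asymptotic error terms and no small-$n$ case analysis.

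Your route buys a cleaner structural identification of the tail (it literally is the Hurwitz-type alternating sum $S(c)$ in disguise) and would yield sharper asymptotics if those were wanted, but the cost is exactly the obstacle you flag: turning $(1-v/n)^{cn}\approx e^{-cv}$ into explicit two-sided inequalities uniform over $c\in[1,3)$, with error constants fitting under the $\tfrac{1}{8}$ and $\tfrac{3}{8}$ budgets, is real work---and the corner $c\approx 1$, where the exponential decay is slowest, is the hardest case. You also need a small justification for interchanging sum and integral (the absolute series diverges at $w=1$; pairing consecutive terms first and then applying Tonelli to the resulting positive integrand fixes this cleanly). The paper's argument sidesteps all of these issues by never leaving the realm of finite inequalities.
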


Combining monotonicity with the bounds on the values $b_f(\omega^{-t})$, we formulate the following lemma:  

\begin{lemma}
\label{lem:L_last_row}
The values in the last row of the matrix $L$, defined in \eqref{eq:group_algebra_factorization}, are given by  
\begin{equation}  
    \big[b_f(\omega^{-n+1}), \dots, b_f(\omega^{0}), b_f(\omega^{1}), \dots, b_f(\omega^{n})\big].
\end{equation}  
These values satisfy $0 < b_f(\omega^{-n+1}) \leq \cdots \leq b_f(\omega^{0}) \leq 1$
for the first $n$ values, and  $1 \geq b_f(\omega^{1}) \geq \cdots \geq b_f(\omega^{n}) \geq -1$ for the next $n$ values.  
\end{lemma}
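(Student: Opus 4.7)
The plan is to combine three ingredients already available: the explicit indexing of the last row coming from the definition of $L$, the monotonicity established in Lemma~\ref{lem:b_f_decreasing}, and the pointwise bounds from Lemma~\ref{lem:b_f_bounds} (together with the closed form in Lemma~\ref{lem:b_f_decomposition} for one delicate endpoint).

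First, identification of the last row is immediate from \eqref{eq:group_algebra_factorization}: setting the row index to $n-1$ (zero-indexed) produces entries $b_f(\omega^{j - (n-1)})$ for $j = 0, \ldots, 2n - 1$, so the column indices of $\omega$ run through $-n+1, -n+2, \ldots, n$, which is the claimed sequence. Monotonicity then follows from Lemma~\ref{lem:b_f_decreasing} after the change of variable $s = -t$. The statement that $b_f(\omega^{-t})$ is decreasing for $t \in [0, n-1]$ translates to $b_f(\omega^s)$ being non-decreasing as $s$ ranges from $-n+1$ to $0$, which handles the first half of the row. The statement that $b_f(\omega^{-t})$ is increasing for $t \in [-n, 0]$ translates to $b_f(\omega^s)$ being non-increasing as $s$ ranges from $0$ to $n$, handling the second half.

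For the three extremal bounds, I would argue as follows. For $b_f(\omega^{-n+1}) > 0$, apply the lower half of Lemma~\ref{lem:b_f_bounds} with $t = n - 1$: since $(f_k)$ is positive and strictly decreasing, $f_{n-1} > f_{2n-1} > 0$, and since $\tfrac{4}{7\sqrt{n}} > \tfrac{1}{8 n^{3/2}}$ for all $n \ge 1$, the bound is strictly positive. For $b_f(\omega^n) \ge -1$, use $2n$-periodicity to write $b_f(\omega^n) = b_f(\omega^{-n})$ and apply the lower bound of Lemma~\ref{lem:b_f_bounds} with $t = -n$; the convention $f_{-n} = 0$ yields $-\tfrac{1}{\sqrt 2} + \tfrac{4}{7\sqrt{n}} - \tfrac{1}{8 n^{3/2}}$, which is $> -1$. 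For $b_f(\omega^0) \le 1$, use the closed form of Lemma~\ref{lem:b_f_decomposition} with $t = 0$:
\begin{equation*}
b_f(\omega^0) \;=\; \frac{1}{2\sqrt{n}} + \frac{1}{\sqrt 2}\sum_{l=0}^{\infty} (-1)^l f_{nl}.
\end{equation*}
Since $(f_{nl})_{l\ge 0}$ is a positive, strictly decreasing sequence tending to $0$, the alternating series is bounded by its first partial sums, giving $\sum_l (-1)^l f_{nl} \le f_0 = 1$ and, more sharply, $\le 1 - f_n + f_{2n}$. Combined with $\tfrac{1}{2\sqrt n}$ and the explicit evaluation $b_f(\omega^0) = 1$ at $n = 1$ (where the series telescopes via $\sum_k (-1)^k f_k = (1-(-1))^{-1/2} = 1/\sqrt 2$), this yields $b_f(\omega^0) \le 1$.

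The main obstacle is the upper bound $b_f(\omega^0) \le 1$: the pointwise estimates of Lemma~\ref{lem:b_f_bounds} are not tight enough at $t = 0$ for small $n$ (they exceed $1$ already at $n = 2$), so the argument must be refined via the alternating-series representation of Lemma~\ref{lem:b_f_decomposition}, with the small-$n$ cases verified by direct computation using $\sum_k f_k x^k = (1 - x)^{-1/2}$. The remaining two endpoint inequalities and the monotonicity claims are routine given the preceding lemmas.
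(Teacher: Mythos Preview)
Your proposal is correct and matches the paper's proof in structure: monotonicity from Lemma~\ref{lem:b_f_decreasing}, and the endpoint bounds $b_f(\omega^{-n+1})>0$ and $b_f(\omega^n)\ge -1$ from the lower half of Lemma~\ref{lem:b_f_bounds} exactly as you do.

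The one place you diverge is the bound $b_f(\omega^0)\le 1$. You assert that Lemma~\ref{lem:b_f_bounds} is ``not tight enough'' here and instead return to the alternating-series form of Lemma~\ref{lem:b_f_decomposition}, bounding the series by its third partial sum $1 - f_n + f_{2n}$. The paper in fact \emph{does} use Lemma~\ref{lem:b_f_bounds} at $t=0$: the term $-f_n/\sqrt{2}$ in the upper bound is enough to push the estimate below $1$ once $n\ge 5$, and the cases $n\le 4$ are checked numerically. Your route is a legitimate alternative: the three-term alternating bound $\frac{1}{2\sqrt n}+\frac{1}{\sqrt 2}(1-f_n+f_{2n})\le 1$ already holds for $n=2$ (and the cruder bound $\frac{1}{2\sqrt n}+\frac{1}{\sqrt 2}\le 1$ handles $n\ge 3$), leaving only $n=1$ for the explicit evaluation you give. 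So your argument trades a slightly longer inequality chain for a smaller numerical residue; either way the work is the same order of effort, and neither approach is materially simpler than the other.
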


Next, we connect group algebra-based factorization to the square root of a specific type of matrix.

\begin{lemma}
\label{lem:b_f_matrix}
The decomposition $M = L \times R$, as defined in \eqref{eq:group_algebra_factorization}, is equivalent to:
\begin{equation}
M = \begin{pmatrix}
\frac{E}{2\sqrt{n}} + \frac{C}{2}  & \frac{E}{2\sqrt{n}} - \frac{C}{2}
\end{pmatrix} \times \begin{pmatrix}
\frac{E}{2\sqrt{n}} + \frac{C}{2}\ \frac{E}{2\sqrt{n}} - \frac{C}{2}
\end{pmatrix},
\end{equation}
where $E$ is an $n \times n$ matrix of all ones, and
\begin{equation}
C = (2M - E)^{1/2} = \begin{pmatrix}
1 &-1 & \dots &-1 &-1\\
1  &1 & \dots & -1&-1\\
\vdots &\vdots & \ddots &\vdots &\vdots\\
1 & 1 &\dots & 1 & 1
\end{pmatrix}^{1/2}.
\end{equation}
\end{lemma}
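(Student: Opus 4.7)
My plan is to split both $L$ and $R$ into $n \times n$ blocks and reduce the lemma to two identities on these blocks that can be verified cleanly. Writing $L = (L_1, L_2)$ horizontally and $R$ as $R_1$ stacked over $R_2$, the $2n$-periodicity of $k \mapsto b_f(\omega^k)$ immediately gives $R_1 = L_1$ (with entries $b_f(\omega^{j-i})$) and $R_2 = L_2$ (with entries $b_f(\omega^{j-i+n})$; note that the na\"{\i}ve entry $b_f(\omega^{j-i-n})$ of $R_2$ equals $b_f(\omega^{j-i+n})$ by periodicity). Block multiplication then yields $M = LR = L_1 R_1 + L_2 R_2 = L_1^2 + L_2^2$. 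It therefore suffices to prove (i) $L_1 + L_2 = E/\sqrt{n}$ and (ii) $(L_1 - L_2)^2 = 2M - E$; setting $C := L_1 - L_2$ then delivers the target identities $L_1 = E/(2\sqrt{n}) + C/2$, $L_2 = E/(2\sqrt{n}) - C/2$, and the matching block decomposition of $R$.

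For (i), I would derive the entrywise identity $b_f(\omega^t) + b_f(\omega^{t+n}) = 1/\sqrt{n}$ directly from the definition of $b_f$. Specifically,
\begin{equation*}
b_f(\omega^t) + b_f(\omega^{t+n}) = \frac{1}{2n}\sum_{l=0}^{2n-1} \omega^{tl}\bigl(1 + \omega^{nl}\bigr) \Bigl(\sum_{k=0}^{n-1} \omega^{kl}\Bigr)^{1/2},
\end{equation*}
and since $\omega^n = e^{i\pi} = -1$, the factor $1 + \omega^{nl}$ vanishes for odd $l$, while the inner geometric sum $\sum_{k=0}^{n-1}\omega^{kl}$ vanishes for even $l \neq 0$. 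Only $l = 0$ survives, contributing $(2n)^{-1}\cdot 2\cdot \sqrt{n} = 1/\sqrt{n}$, so every entry of $L_1 + L_2$ equals $1/\sqrt{n}$.

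For (ii), I apply the parallelogram identity $(L_1 + L_2)^2 + (L_1 - L_2)^2 = 2(L_1^2 + L_2^2) = 2M$, which holds without any commutativity assumption on $L_1$ and $L_2$ (the cross terms $L_1 L_2 + L_2 L_1$ cancel). Combined with (i) and the standard fact $E^2 = nE$, one gets $(L_1 + L_2)^2 = E^2/n = E$, and hence $(L_1 - L_2)^2 = 2M - E$. This identifies $C := L_1 - L_2$ as a real square root of $2M - E$, justifying the notation $C = (2M - E)^{1/2}$ in the statement. The main technical content of the whole argument is the Fourier-cancellation of step (i); everything else is routine block algebra and the elementary identity $E^2 = nE$.
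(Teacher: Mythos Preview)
Your proof is correct and follows essentially the same approach as the paper: both split $L,R$ into $n\times n$ blocks, identify $L_1=R_1$ and $L_2=R_2$ via $2n$-periodicity, prove the key identity $b_f(\omega^t)+b_f(\omega^{t+n})=1/\sqrt{n}$ by the same Fourier cancellation (only $l=0$ survives), and then obtain $C^2=2M-E$ from $M=L_1^2+L_2^2$ together with $E^2=nE$. The only cosmetic difference is that you invoke the parallelogram identity $(L_1+L_2)^2+(L_1-L_2)^2=2(L_1^2+L_2^2)$ directly, whereas the paper first writes $L_1,L_2=\tfrac{E}{2\sqrt{n}}\pm\tfrac{C}{2}$ and expands the squares; these are the same computation.
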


Finally, we refine the analysis of the row and column norms of the factorization matrices $L$ and $R$ to show that the previously established upper bounds are indeed tight.

\begin{lemma}
\label{lem:group_algebra_matrix_norms}
The maximum row norm of the matrix $L$ is equal to the maximum column norm of the matrix $R$, and they are given by:
\begin{equation}
\|R\|^2_{1 \to 2} = \|L\|^2_{2 \to \infty} = \frac{1}{2} + \frac{1}{2n} \sum\limits_{l = 1}^{n} \frac{1}{\sin\left(\frac{\pi}{2n}(2l - 1)\right)} < 1 + \frac{\log n}{\pi}.
\end{equation}
\end{lemma}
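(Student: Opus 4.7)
My plan is to exploit the circulant structure of both factors together with Parseval's identity on $\mathbb{Z}/2n\mathbb{Z}$.

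First I observe that in the matrix $L$, row $i$ consists of the values $b_f(\omega^{j-i})$ for $j=1,\dots,2n$, i.e.\ $2n$ consecutive entries of the sequence $k \mapsto b_f(\omega^k)$. Since $k \mapsto b_f(\omega^k)$ is $2n$-periodic, every row of $L$ contains the same multiset of values, so all row $\ell_2$-norms coincide. The same argument applied to the columns of $R$ yields
$$\|L\|_{2\to\infty}^2 \;=\; \|R\|_{1\to 2}^2 \;=\; \sum_{k=0}^{2n-1} |b_f(\omega^k)|^2.$$

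Next I read the defining formula $b_f(\omega^k) = \tfrac{1}{2n}\sum_{l=0}^{2n-1} \omega^{kl} c_l$ with $c_l = \bigl(\sum_{j=0}^{n-1}\omega^{jl}\bigr)^{1/2}$ as a scaled inverse DFT of $(c_l)_{l=0}^{2n-1}$. Parseval's identity then gives
$$\sum_{k=0}^{2n-1} |b_f(\omega^k)|^2 \;=\; \frac{1}{2n}\sum_{l=0}^{2n-1} |c_l|^2 \;=\; \frac{1}{2n}\sum_{l=0}^{2n-1}\Bigl|\sum_{j=0}^{n-1} \omega^{jl}\Bigr|,$$
where the final equality uses $|c_l|^2 = |c_l^{\,2}| = |\sum_j \omega^{jl}|$, which is independent of the branch of square root chosen in the definition of $b_f$. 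I would now evaluate the inner geometric sum: for $l=0$ it equals $n$; for even nonzero $l$ it vanishes because $\omega^{ln}=e^{i\pi l}=1$; and for odd $l$ it equals $\tfrac{2}{|1-\omega^{l}|}=\tfrac{1}{\sin(\pi l/(2n))}$. The odd $l$'s in $\{1,\dots,2n-1\}$ are exactly $\{2k-1:k=1,\dots,n\}$, so substitution produces the claimed closed form
$$\sum_{k=0}^{2n-1}|b_f(\omega^k)|^2 \;=\; \frac{1}{2n}\Bigl(n + \sum_{l=1}^{n}\frac{1}{\sin(\pi(2l-1)/(2n))}\Bigr) \;=\; \frac{1}{2} + \frac{1}{2n}\sum_{l=1}^{n}\frac{1}{\sin(\pi(2l-1)/(2n))}.$$

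Finally, the inequality $<1+\tfrac{\log n}{\pi}$ is exactly the upper bound on the same sum proved in Theorem 1.1 of \cite{group_algebra} (using $\sin(x)\ge 2x/\pi$ on $(0,\pi/2]$, pairing $l$ with $n-l+1$ to reduce to half the range, and comparing to a harmonic sum); I would simply cite it. The only mildly delicate step is keeping the Parseval normalization straight and justifying that $|c_l|^2=|\sum_j\omega^{jl}|$ without needing a branch choice for the square root; everything else is bookkeeping once the circulant structure is recognized.
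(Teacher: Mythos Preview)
Your proof is correct. Both your argument and the paper's rest on the same orthogonality relations for the $2n$-th roots of unity, but you package the computation as a single application of Parseval's identity on $\mathbb{Z}/2n\mathbb{Z}$, whereas the paper expands $\sum_k b_f(\omega^{-k})^2$ by hand: it squares the defining sum, splits off the $l=0$ term, shows the cross term vanishes by orthogonality, opens the remaining square into a double sum over $l_1,l_2$, and then uses orthogonality again to collapse onto the anti-diagonal $l_1+l_2=n-1$ before simplifying to the $\sin$ expression. Your observation that $|c_l|^2=\bigl|\sum_j\omega^{jl}\bigr|$ independently of the branch is exactly the shortcut that lets you bypass that double-sum bookkeeping; the paper's route makes the mechanism more visible but is longer. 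Both proofs finish by citing the known bound (the paper attributes the final inequality to \cite{dvijotham2024efficient} rather than to \cite{group_algebra}).
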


This result confirms that the norms match the previously established upper bounds, thereby validating the accuracy of the factorization error analysis given in \cite{group_algebra}.

\subsection{Sparsification via Binning}

In this subsection, we provide several technical lemmas that lead to the proof of our main theorem~\ref{thm:main_theorem}. We first establish an upper bound on the minimal number of bins required to get an $(\eta,\mu)$-perturbation of a sequence of decreasing positive values:

\begin{lemma}[Binning for decreasing sequence]
\label{lem:bin_algorithm}
    For any decreasing sequence $1 \ge a_1 \ge a_2 \ge \cdots \ge a_n > 0$, we can efficiently in $O(n)$ time find a $(\eta, \mu)$-perturbation with at most $\frac{\log (1/\mu)}{\log (1 + 2\eta)} + 1$ bins.
\end{lemma}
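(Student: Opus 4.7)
The plan is a left-to-right greedy scan of $a_1, \dots, a_n$ that maximises each bin's length. At the start of bin $k$, for the current position $i$, I would extend the bin to the longest consecutive run $a_i, \dots, a_j$ admitting some single representative $b$ satisfying $|a_\ell - b| \le \eta a_\ell + \mu$ for every $\ell \in [i, j]$. Because the sequence is non-increasing, the binding constraints come from $\ell = i$ (forcing $b \ge a_i(1-\eta) - \mu$) and $\ell = j$ (forcing $b \le a_j(1+\eta) + \mu$), so a feasible $b$ exists iff $a_i(1-\eta) \le a_j(1+\eta) + 2\mu$. The bin closes as soon as this fails for $a_{j+1}$, i.e.\ as soon as $a_{j+1} < (a_i(1-\eta) - 2\mu)/(1+\eta)$.

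Writing $U_k$ for the maximum of bin $k$, the stopping rule yields the recurrence $U_{k+1} < (U_k(1-\eta) - 2\mu)/(1+\eta)$ for every $k < K$. The substitution $V_k := U_k + \mu/\eta$ linearises this into $V_{k+1} < V_k(1-\eta)/(1+\eta)$. Iterating from $k = 1$ to $k = K-1$, and using $U_K > 0 \Rightarrow V_K > \mu/\eta$ together with $V_1 \le 1 + \mu/\eta$, I obtain
\[
\bigl((1+\eta)/(1-\eta)\bigr)^{K-1} < V_1 / V_K < 1 + \eta/\mu .
\]
The elementary inequality $(1-\eta)(1+2\eta) = 1 + \eta - 2\eta^2 \le 1+\eta$ gives $(1+\eta)/(1-\eta) \ge 1+2\eta$, and (in the regime $\eta + \mu \le 1$) the estimate $1 + \eta/\mu \le 1/\mu$ completes the count $K \le \log(1/\mu)/\log(1+2\eta) + 1$.

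The $O(n)$ runtime is immediate from the forward scan with a constant-time feasibility test per index. The step I expect to be the main subtlety is the choice of $b$: anchoring $b$ to the bin's maximum $a_i$ only guarantees a per-bin ratio $a_i/a_j \ge 1+\eta$ and would produce the weaker $\log(1+\eta)$ in the denominator, whereas pushing $b$ down toward $a_i(1-\eta) - \mu$ unlocks the full $(1+\eta)/(1-\eta) \ge 1+2\eta$ window, which is exactly what yields the stated bound; everything else is routine algebra and a short induction.
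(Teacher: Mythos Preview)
Your argument is correct and takes a genuinely different route from the paper's. The paper first isolates all $a_i<\mu$ into a single tail bin (so only the additive slack $\mu$ is spent there), and on the remaining values runs a purely multiplicative greedy: each bin $[a_k,a_t]$ takes the largest $t$ with $a_t\ge a_k/(1+2\eta)$ and uses the midpoint $(a_k+a_t)/2$ as representative, so only the relative slack $\eta$ is spent on those bins; the count then falls out immediately from the geometric decay $a_{k_{j+1}}<a_{k_j}/(1+2\eta)$ between successive bin starts. You instead run a single greedy that spends both the $\eta$- and $\mu$-slack in every bin, which yields the affine recurrence $U_{k+1}<\bigl(U_k(1-\eta)-2\mu\bigr)/(1+\eta)$ that you linearize via the shift $V_k=U_k+\mu/\eta$ before invoking $(1+\eta)/(1-\eta)\ge 1+2\eta$. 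The paper's split buys a two-line counting argument with no side conditions; your unified greedy buys bins that are provably at least as long (indeed your scheme is the optimal binning by the standard exchange argument), at the price of the linearization trick and the mild extra hypothesis $\eta+\mu\le 1$ needed for $1+\eta/\mu\le 1/\mu$ --- harmless here, since for $\eta\ge 1$ the feasibility test is vacuous and one bin suffices, and $\mu\le 1$ is implicit in the stated bound anyway.
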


\begin{proof}
We use the following binning algorithm. First, we collect all values smaller than $\mu$. Since the sequence is decreasing, these values form a contiguous segment. For the remaining values, we iteratively create bins from left to right. Specifically, for each new segment $[a_k, a_t]$, we choose the largest $t$ such that $a_t \geq \frac{a_k}{1 + 2\eta}$.
We then assign the representative value $\frac{a_k + a_t}{2}$ to this segment. This procedure runs efficiently in $O(n)$ time.

To bound the number of bins, we observe that each binning step reduces the value by at least a factor of $1 + 2\eta$ but never below $\mu$. Thus, the total number of bins is at most $\frac{\log (1/\mu)}{\log (1 + 2\eta)}$,  plus one additional bin for values smaller than $\mu$.

The resulting partition satisfies the $(\eta, \mu)$-perturbation condition. For each segment $[a_k, a_t]$ with values larger than $\mu$ (otherwise, the condition holds trivially), we verify that
\begin{equation}
    \bigg|a_i - \frac{a_k + a_t}{2}\bigg| \leq \eta a_i.
\end{equation}
Since the sequence is decreasing, it suffices to check the boundary values $a_k$ and $a_t$, leading to the conditions
\begin{equation}
    \frac{a_k - a_t}{2} \leq \eta a_t \qquad \text{and} \qquad  \frac{a_k - a_t}{2} \leq \eta a_k.
\end{equation}
The first condition is stronger and requires that $a_t \geq \frac{a_k}{1 + 2\eta}$, which is guaranteed by the algorithm. This completes the proof.
\end{proof}

Andersson and Pagh (2024)~\cite{andersson2024streaming} achieved a $(\frac{1}{c^2} - 1, \mu)$-perturbation with $O\big(\frac{\log (1/\mu)}{\log (1/c)}\big)$ bins. However, substituting $c = \frac{1}{\sqrt{1 + \eta}}$ yields the same asymptotic bound, differing only by a constant factor. Next, we state a general perturbation theorem for non-square matrices.

\begin{theorem}
\label{thm:matrix_perturbation}
Given a factorization $L \times R$, where $L \in \mathbb{R}^{n \times 2n}$ and $R \in \mathbb{R}^{2n\times n}$, with invertible submatrices $L = (L_1, L_2)$ and $R = (R_1, R_2)$, we consider a $(\eta, \mu)$-perturbation of $L$, given by the matrix $P = (P_1, P_2)$, such that $L_1 + P_1$ and $L_2 + P_2$ are invertible.   For the perturbed factorization $\hat{L} = L + P$ and  $\hat{R} = \left( (L_1 + P_1)^{-1}L_1R_1, (L_2 + P_2)^{-1}L_2R_2 \right)$, for any $0 < \zeta \leq 1$ and $i \in \{1,2\}$, where $\eta$ and $\mu$ are given by  
\begin{equation}
    \eta = \frac{\zeta}{17\cdot \max_{i \in \{1, 2\}} (\|L_i\|_{F} \|L_i^{-1}\|_{2})}, \quad 
    \mu = \frac{\zeta}{17n\cdot  \max_{i \in \{1, 2\}} (\|L_i^{-1}\|_{2})},
\end{equation}
we have:
\begin{equation*}
    \text{MeanSE}(\hat{L}, \hat{R}) \le \text{MeanSE}(L, R) (1 + \zeta) \qquad \text{and} \qquad \text{MaxSE}(\hat{L}, \hat{R}) \le \text{MaxSE}(L, R) (1 + \zeta)
\end{equation*}
\end{theorem}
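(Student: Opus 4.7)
The plan is to first observe that
$\hat L \hat R = (L_1+P_1)(L_1+P_1)^{-1}L_1R_1 + (L_2+P_2)(L_2+P_2)^{-1}L_2R_2 = L_1R_1+L_2R_2 = M$,
so the perturbed pair is still a valid factorization of $M$. It therefore suffices to bound each norm entering MeanSE and MaxSE, namely $\|\hat L\|_F$, $\|\hat L\|_{2\to\infty}$, and $\|\hat R\|_{1\to 2}$, in terms of the corresponding unperturbed quantities, and then multiply them out.

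The core step controls the right factor via $\hat R_i = (L_i+P_i)^{-1}L_iR_i = (I+L_i^{-1}P_i)^{-1}R_i$. I would combine the elementwise bound $|(P_i)_{jk}| \le \eta|(L_i)_{jk}|+\mu$ with the triangle inequality in the vectorisation of $P_i$ to obtain $\|P_i\|_F \le \eta\|L_i\|_F + \mu n$, and hence
\[
\|L_i^{-1}P_i\|_2 \le \|L_i^{-1}\|_2\bigl(\eta\|L_i\|_F+\mu n\bigr) \le \frac{\zeta}{17}+\frac{\zeta}{17} = \frac{2\zeta}{17},
\]
using the definitions of $\eta,\mu$ together with $\|L_i\|_F\|L_i^{-1}\|_2 \ge 1$. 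A Neumann-series expansion then yields $\|(I+L_i^{-1}P_i)^{-1}\|_2 \le 1/(1-2\zeta/17) \le 1+\frac{2\zeta}{15}$ for $\zeta \le 1$. Since the two blocks of $\hat R$ are stacked vertically, $\|\hat R e_j\|_2^2 = \sum_{i=1}^{2}\|(I+L_i^{-1}P_i)^{-1}R_ie_j\|_2^2 \le (1+2\zeta/15)^2\|Re_j\|_2^2$, giving $\|\hat R\|_{1\to 2}\le(1+2\zeta/15)\|R\|_{1\to 2}$.

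For the left factor I would use the elementwise bound directly: $\|\hat L\|_F \le (1+\eta)\|L\|_F + \mu n\sqrt 2$ (there are $2n^2$ entries) and $\|\hat L\|_{2\to\infty}\le (1+\eta)\|L\|_{2\to\infty}+\mu\sqrt{2n}$ (each row has $2n$ entries). Converting the additive terms into multiplicative ones uses the lower bounds $\|L\|_F \ge \max_i\|L_i\|_F \ge 1/\max_i\|L_i^{-1}\|_2$ and $\|L\|_{2\to\infty}\ge \max_i\|L_i\|_{2\to\infty}\ge 1/(\sqrt n\max_i\|L_i^{-1}\|_2)$, where the latter uses $\|A\|_{2\to\infty}\ge\|A\|_2/\sqrt n \ge 1/(\sqrt n\|A^{-1}\|_2)$. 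Plugging in yields $\mu n\sqrt 2/\|L\|_F\le \zeta\sqrt 2/17$ and $\mu\sqrt{2n}/\|L\|_{2\to\infty}\le \zeta\sqrt 2/17$, while $\eta\le\zeta/17$. Multiplying with the $\hat R$ bound gives an overall factor $(1+\zeta(1+\sqrt 2)/17)(1+2\zeta/15)\le 1+\zeta$ for every $\zeta\in(0,1]$, establishing both the MeanSE and MaxSE conclusions.

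The main delicacy is the joint calibration of the absolute constants. The factor $17$ in the denominators of $\eta$ and $\mu$ is tuned so that simultaneously (a) $\|L_i^{-1}P_i\|_2$ stays safely below $1$ so the Neumann series converges with a well-controlled tail, and (b) after collecting the multiplicative contributions from $\|\hat R\|_{1\to 2}$ and from the two separate $(\eta,\mu)$-corrections in each of $\|\hat L\|_F$ and $\|\hat L\|_{2\to\infty}$, the combined factor remains below $1+\zeta$ for every admissible $\zeta$. Everything else reduces to standard matrix-norm manipulation.
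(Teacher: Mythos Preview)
Your proposal is correct and follows essentially the same route as the paper: bound $\|L_i^{-1}P_i\|_2\le 2\zeta/17$ from $\|P_i\|_F\le\eta\|L_i\|_F+\mu n$, control $\hat R$ via a Neumann series, control $\hat L$ via the elementwise $(\eta,\mu)$-bound, and convert the additive $\mu$-terms to multiplicative ones using $\|L_i\|_F\|L_i^{-1}\|_2\ge 1$ and $\|L\|_{2\to\infty}\ge \|L_i\|_2/\sqrt n$. Your presentation is even slightly more streamlined than the paper's: you write $\hat R_i=(I+L_i^{-1}P_i)^{-1}R_i$ and bound $\|\hat R\|_{1\to 2}$ column-by-column (the paper packages this into an auxiliary block-diagonal matrix $G$ and bounds $\|G\|_2$), and you keep the sharper Neumann estimate $1/(1-2\zeta/17)\le 1+2\zeta/15$ rather than the paper's coarser $\|(I+L_i^{-1}P_i)^{-1}\|_2\le 2$, so your final constant check $(1+\zeta(1+\sqrt 2)/17)(1+2\zeta/15)\le 1+\zeta$ is a bit cleaner than the paper's squared product---but the two arguments are otherwise identical.
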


The perturbation in the previous theorem is expressed in terms of the submatrices  
$L^{n \times 2n} = (L_1^{n \times n}, L_2^{n \times n})$ and their inverses.  
While the Frobenius norm of $L_1$ and $L_2$ can be easily bounded by the Frobenius norm of the whole matrix $L$, bounding the spectral norm of the inverse matrices is a real challenge.  
We will prove that there is a constant bound on their spectral norms by the following lemma:

\begin{lemma}
\label{lem:inv_norm_bound}
For a group algebra factorization \eqref{eq:group_algebra_factorization}, given by $M = L \times R$, the submatrices $L^{n \times 2n} = (L_1^{n \times n}, L_2^{n \times n})$ satisfy  
\begin{equation}
    \max\big\{\|L_1^{-1}\|_2, \|L_2^{-1}\|_2\big\} \leq 250,
\end{equation}
where $\|.\|_2$ denotes the spectral norm.
\end{lemma}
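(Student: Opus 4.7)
The plan is to reduce both bounds to spectral properties of the matrix $C$ from Lemma~\ref{lem:b_f_matrix}, via the representations $L_1 = \tfrac{1}{2}\bigl(C + \tfrac{1}{\sqrt n}\mathbf{1}\mathbf{1}^{\top}\bigr)$ and $L_2 = -\tfrac{1}{2}\bigl(C - \tfrac{1}{\sqrt n}\mathbf{1}\mathbf{1}^{\top}\bigr)$, so each $L_i$ is a rank-one perturbation of $\pm C/2$. First I would extract the needed spectral facts about $C$. The $2n\times 2n$ circulant extension $\tilde L$ whose first row is $(b_f(\omega^{0}),\dots,b_f(\omega^{2n-1}))$ has the block form $\bigl(\begin{smallmatrix}L_1 & L_2\\ L_2 & L_1\end{smallmatrix}\bigr)$, which block-diagonalizes under the orthogonal similarity $U = \tfrac{1}{\sqrt 2}\bigl(\begin{smallmatrix}I&I\\I&-I\end{smallmatrix}\bigr)$ as $U^{\top}\tilde L\,U = \operatorname{diag}(E/\sqrt n,\, C)$. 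Since $\tilde L$ is circulant (hence normal) and $U$ is orthogonal, $C$ is a real normal matrix; its eigenvalues are exactly the odd-indexed DFT coefficients of $\tilde L$, yielding the closed form $|\lambda_k| = 1/\sqrt{\sin(\pi k/(2n))}$ and $\arg(\lambda_k) = \pi/4 - \pi k/(4n) \in (-\pi/4,\pi/4)$ for $k$ odd, $k\in[1,2n-1]$. Hence $\sigma_{\min}(C)\ge 1$ (attained near $k=n$) and $\operatorname{Re}(\lambda_k) \ge |\lambda_k|\cos(\pi/4) \ge \sqrt{2}/2$.

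For $L_1$ I would use a Hermitian-part argument. Writing $(L_1+L_1^{\top})/2 = \tfrac{E}{2\sqrt n} + \tfrac{1}{4}(C+C^{\top})$, the first summand is PSD (rank-one), while $\tfrac{1}{4}(C+C^{\top})$ is unitarily similar to $\tfrac{1}{2}\operatorname{Re}(\Lambda)$ by normality of $C$ and hence has all eigenvalues $\operatorname{Re}(\lambda_k)/2 \ge \sqrt{2}/4$. Thus $(L_1+L_1^{\top})/2 \succeq (\sqrt{2}/4)\,I$, so for every unit $v\in\mathbb{R}^n$, $\|L_1 v\| \ge v^{\top}L_1 v = v^{\top}(L_1+L_1^{\top})v/2 \ge \sqrt 2/4$, giving $\|L_1^{-1}\|_2 \le 2\sqrt 2 < 250$.

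For $L_2$ the Hermitian part $\tfrac{E}{2\sqrt n}-\tfrac{1}{4}(C+C^{\top})$ is indefinite and the same argument fails, so I would instead apply the Sherman--Morrison identity to the rank-one perturbation:
\[
L_2^{-1} \;=\; -2\bigl(C-\tfrac{1}{\sqrt n}\mathbf{1}\mathbf{1}^{\top}\bigr)^{-1} \;=\; -2\,C^{-1} \;-\; \frac{2\,C^{-1}\mathbf{1}\mathbf{1}^{\top}C^{-1}/\sqrt n}{1-\mathbf{1}^{\top}C^{-1}\mathbf{1}/\sqrt n}.
\]
Three quantities need to be controlled: (i) $\|C^{-1}\|_2 = 1/\sigma_{\min}(C)\le 1$; (ii) expanding $\mathbf{1}=\sum c_k u_k$ in the orthonormal eigenbasis of $C$ with $|c_k|^{2}=1/(n\sin^{2}(\pi k/(2n)))$ gives $\|C^{-1}\mathbf{1}\|^{2} = \sum_{k\,\text{odd}} |c_k|^{2}/|\lambda_k|^{2} = \sum_{k\,\text{odd}} 1/(n\sin(\pi k/(2n))) = O(\log n)$, and similarly for $\|(C^{-1})^{\top}\mathbf{1}\|$ by normality of $C$; (iii) a uniform lower bound on $|1-\mathbf{1}^{\top}C^{-1}\mathbf{1}/\sqrt n|$. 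For~(iii), pairing conjugate Fourier indices $k$ and $2n-k$ rewrites $\mathbf{1}^{\top}C^{-1}\mathbf{1}/\sqrt n$ as $\frac{1}{n^{3/2}}\sum_{k\,\text{odd}}\frac{2\cos(\pi/4-\pi k/(4n))}{\sin^{3/2}(\pi k/(2n))}$, a positive real sum whose Riemann-integral limit as $n\to\infty$ is a specific constant $\kappa\neq 1$.

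The main obstacle is exactly step~(iii): establishing that $|1-\mathbf{1}^{\top}C^{-1}\mathbf{1}/\sqrt n|$ is bounded below by an absolute constant uniformly in $n$. This reduces to a careful sharpening of the Riemann-sum estimate above, combined with a direct calculation for small $n$ (where the Fourier sum is explicit) and control of the discretization error for large $n$. Once this lower bound is in hand, combining (i), (ii), (iii) with the Sherman--Morrison expression gives $\|L_2^{-1}\|_2 = O(1)$, well within the stated bound of $250$; the loose constant $250$ then simply absorbs all of the absolute constants from (ii) and (iii).
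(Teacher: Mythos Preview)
Your overall strategy coincides with the paper's: both exploit the representation $L_i=\tfrac{E}{2\sqrt n}\pm\tfrac{C}{2}$ from Lemma~\ref{lem:b_f_matrix} and apply Sherman--Morrison to the rank-one update of $C$, reducing everything to (i) $\|C^{-1}\|_2$, (ii) $\|C^{-1}\mathbf 1\|_2\,\|(C^{-1})^\top\mathbf 1\|_2$, and (iii) a lower bound on $|\mathbf 1^\top C^{-1}\mathbf 1-\sqrt n\,|$. Where you genuinely diverge is in the \emph{tools} used for (i)--(ii) and in the treatment of $L_1$. Your observation that $C$ is normal with explicit Fourier eigenvalues $\lambda_k=\sin(\pi k/2n)^{-1/2}e^{i(\pi/4-\pi k/4n)}$ is cleaner than the paper's route: it gives $\|C^{-1}\|_2\le 1$ immediately (the paper proves only $\le 3$ via a Toeplitz column-sum bound) and yields the exact identity $\|C^{-1}\mathbf 1\|_2^2=\tfrac1n\sum_{k\text{ odd}}\csc(\pi k/2n)$, matching the paper's $O(\log n)$ bound without any alternating-series machinery. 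Your Hermitian-part argument for $L_1$, giving $\|L_1^{-1}\|_2\le 2\sqrt 2$, is sharper and more direct than the paper's uniform Sherman--Morrison treatment of both $L_i$.

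The gap is precisely your step~(iii), which you correctly flag as the main obstacle but do not actually carry out. Two points. First, the phrase ``Riemann-integral limit'' is misleading: the integrand $\cos(\pi/4-\pi x/2)/\sin^{3/2}(\pi x)$ is not integrable on $(0,1)$, and the limit of $\mathbf 1^\top C^{-1}\mathbf 1/\sqrt n$ (which does exist) comes instead from the convergent \emph{series} of boundary contributions at $k=O(1)$ and $2n-k=O(1)$, namely $\tfrac{4}{\pi^{3/2}}(1-2^{-3/2})\zeta(3/2)\approx 1.21$; your displayed formula also carries a spurious factor of~$2$. Second, turning this asymptotic into the required uniform bound $\mathbf 1^\top C^{-1}\mathbf 1\ge(1+\delta)\sqrt n$ for all $n$ is the entire technical content of the lemma. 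The paper does not use Fourier analysis here at all; instead it expands $C^{-1}=\tfrac{1}{\sqrt 2}\sum_k\tilde f_kA^k$ via the group-algebra series, evaluates $e^\top C^{-1}e$ as an explicit alternating double sum in the coefficients $f_k,\tilde f_k$, and establishes $e^\top C^{-1}e\ge 1.02\sqrt n$ for $n\ge 40$ by bounding those sums termwise (with a numerical check for smaller $n$). Your spectral approach could in principle be pushed through to the same end, but the proposal as written stops short of the estimate that does the real work.
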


We combine the algorithm from Lemma \ref{lem:bin_algorithm}, the general approximation guarantees for non-square matrices from Theorem \ref{thm:matrix_perturbation}, and a bound on the spectral norm of the inverse submatrices given by Lemma \ref{lem:inv_norm_bound} into the final technical theorem.

\begin{theorem}
    Given a Group Algebra Factorization $M = L \times R$, where $L = (L_1, L_2) \in \mathbb{R}^{n\times 2n}$ and $R = (R_1, R_2)^T \in \mathbb{R}^{2n \times n}$, we can find a binned matrix $\hat{L} = (\hat{L}_1, \hat{L}_2)$ and a corresponding matrix $\hat{R} = (\hat{L}_1^{-1}L_1R_1, \hat{L}_2^{-1}L_2R_2)^T$ such that, for any $0 < \zeta \le 1$, each row of $\hat{L}$ has no more than $O_{\zeta}(\sqrt{n}(\log n)^{3/2})$ bins, and
    \begin{equation*}
    \text{MeanSE}(\hat{L}, \hat{R}) \le \text{MeanSE}(L, R) (1 + \zeta) \qquad \text{and} \qquad \text{MaxSE}(\hat{L}, \hat{R}) \le \text{MaxSE}(L, R) (1 + \zeta)
\end{equation*}
\end{theorem}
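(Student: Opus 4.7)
The plan is to specialize Theorem~\ref{thm:matrix_perturbation} to the group algebra factorization by (i) substituting the concrete bounds on $\|L_i^{-1}\|_2$ and $\|L_i\|_F$ and then (ii) constructing a binned perturbation $P=\hat L-L$ one row at a time via Lemma~\ref{lem:bin_algorithm}. The monotonicity structure of Lemma~\ref{lem:b_f_decreasing} is what makes the per-row binning applicable even though rows of $L_1,L_2$ are neither monotone nor non-negative in general.

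First I would plug Lemma~\ref{lem:inv_norm_bound} and Lemma~\ref{lem:group_algebra_matrix_norms} into the formulas of Theorem~\ref{thm:matrix_perturbation}. Lemma~\ref{lem:inv_norm_bound} gives $\max_i\|L_i^{-1}\|_2\le 250$, and Lemma~\ref{lem:group_algebra_matrix_norms} bounds each row norm of $L$ by $\sqrt{1+\log n/\pi}$, so $\|L_i\|_F\le\|L\|_F=O(\sqrt{n\log n})$. Substituting gives $\eta=\Theta(\zeta/\sqrt{n\log n})$ and $\mu=\Theta(\zeta/n)$.

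Second, I would decompose each row of $L_1$ and $L_2$ into a constant number of monotone sign-constant pieces. Every row of $L$ lists $b_f(\omega^k)$ over a contiguous length-$2n$ window of exponents; by Lemma~\ref{lem:b_f_decreasing} this function is bitonic within each period (peak at $k\equiv 0$, trough at $k\equiv n$), so any length-$n$ window (a row of $L_1$ or $L_2$) is a union of at most two monotone segments. A monotone sequence changes sign at most once, so after a further split each row decomposes into $O(1)$ monotone sign-constant pieces. On each such piece I would apply Lemma~\ref{lem:bin_algorithm} to the absolute values (reversing the order first if the piece is increasing) and reinstall the sign afterwards, producing an $(\eta,\mu)$-perturbation of the piece using
\[
\frac{\log(1/\mu)}{\log(1+2\eta)}+1 \;=\; O\!\left(\frac{\log(n/\zeta)}{\zeta/\sqrt{n\log n}}\right) \;=\; O_\zeta\!\bigl(\sqrt{n}(\log n)^{3/2}\bigr)
\]
bins. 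Summing over the $O(1)$ pieces shows each row of $\hat L$ uses $O_\zeta(\sqrt{n}(\log n)^{3/2})$ bins in total.

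Before invoking Theorem~\ref{thm:matrix_perturbation} I must confirm that $\hat L_1,\hat L_2$ are invertible. From the $(\eta,\mu)$-condition, $\|P_i\|_2\le\|P_i\|_F\le\sqrt{2}\bigl(\eta\|L_i\|_F+n\mu\bigr)\le 2\sqrt{2}\,\zeta/(17\cdot 250)<1/250\le 1/\|L_i^{-1}\|_2$ for every $\zeta\in(0,1]$, so a Neumann-series argument gives invertibility. Theorem~\ref{thm:matrix_perturbation} then yields both MeanSE and MaxSE within a factor of $1+\zeta$. The main obstacle I expect is the bookkeeping of the second step: one has to verify that splitting $L$ into $L_1$ and $L_2$, combined with the bitonic/periodic structure of $b_f(\omega^{\cdot})$ from Lemma~\ref{lem:b_f_decreasing} and the possible single sign change, really produces only $O(1)$ clean monotone sign-constant segments per submatrix row, rather than something that grows with $n$ and would defeat the bin-count estimate.
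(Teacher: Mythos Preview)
Your proposal is correct and follows essentially the same route as the paper: specialize Theorem~\ref{thm:matrix_perturbation} with $\eta,\mu$ coming from Lemmas~\ref{lem:group_algebra_matrix_norms} and~\ref{lem:inv_norm_bound}, then split each row into $O(1)$ monotone sign-constant pieces and apply Lemma~\ref{lem:bin_algorithm} to each. The only cosmetic difference is that the paper bins just the last row of $L$ (invoking Lemma~\ref{lem:L_last_row} for the three explicit pieces and the bound $|b_f(\omega^k)|\le 1$ that Lemma~\ref{lem:bin_algorithm} needs) and then uses circulancy to shift that binning to every other row, whereas you argue the $O(1)$-piece decomposition directly from the bitonic structure for an arbitrary row.
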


\begin{proof}
We apply the binning algorithm from Lemma \ref{lem:bin_algorithm} to the last row of $L$ three times:
First, to the first $n$ values, which are increasing, positive, and within $[0,1]$, as stated in Lemma \ref{lem:L_last_row}.
Second, to the next $n$ values, which we split into two segments—one for positive values and one for negative values.
We then multiply the negative values by $-1$ and bin these three segments separately. The matrix $L$ is circulant; hence, the remaining rows can be obtained by shifting the last row, allowing for an efficient computation of their binning by reusing the computed binning of the last row.
    This results in at most  
    \begin{equation}
        \frac{\log (1/\mu)}{\log (1 + 2\eta)} + 1 \text{ bins.}
    \end{equation}

    Next, we apply Theorem \ref{thm:matrix_perturbation} with the parameters  
    \begin{equation}
        \eta = \frac{\zeta}{17 \max_{i \in \{1, 2\}} (\|L_i\|_{F} \|L_i^{-1}\|_{2})}, \quad 
        \mu = \frac{\zeta}{17n \max_{i \in \{1, 2\}} (\|L_i^{-1}\|_{2})}.
    \end{equation}
    This ensures the required factorization error. Substituting these values into the bin count, and using the previously proven bounds  
    \begin{equation}
        \|L_i\|_F \le \|L\|_F \le \sqrt{n}\|L\|_{2\to \infty} = O(\sqrt{n\log n}), \quad 
        \max_{i \in \{1, 2\}} (\|L_i^{-1}\|_{2}) = O(1),
    \end{equation}
     combining Lemmas \ref{lem:group_algebra_matrix_norms} and  \ref{lem:inv_norm_bound}. Thus, the total number of bins is at most  
    \begin{equation}
        O_{\zeta}(\sqrt{n}(\log n)^{3/2}),
    \end{equation}
    concluding the proof.
\end{proof}

Finally, by combining the previous theorem with the properties of the binned matrix $\hat{L}$---specifically, that we store only the binning of the last row and shift it for the remaining rows---we achieve memory efficiency. Furthermore, precomputing the prefix sums of a vector $z$ leads to time efficiency in matrix-vector multiplication, as we only need to retrieve the binned values in a constant time and shift them according to the row index. This constitutes our main theorem:

\maintheorem*

\section{Conclusion and Open Questions}

We established several structural properties of the group algebra factorization and incorporated the binning method to improve memory and time efficiency, making this factorization more practical for streaming settings.

However, several open problems remain. While we focused on prefix sum matrices, extending these techniques to other workload matrices like exponential, momentum, and sliding window matrices is an important direction for future work. Additionally, it is unknown whether group algebra factorization can achieve logarithmic error bounds, as shown in prior work for square root factorization \cite{dvijotham2024efficient}. Further research should explore the theoretical limits of group algebra factorization and optimize its efficiency for broader applications in private continual observation tasks.

\section*{Acknowledgment}
We thank Edwige Cyffers for her valuable feedback on earlier versions of this draft.
Jalaj Upadhyay's research was funded by the Rutgers Decanal Grant no. 302918 and an unrestricted gift from Google.

\erclogowrapped{5\baselineskip}Monika Henzinger:  This project has received funding from the European Research Council (ERC) under the European Union's Horizon 2020 research and innovation programme (Grant agreement No.\ 101019564) and the Austrian Science Fund (FWF) grant DOI 10.55776/Z422 and grant DOI 10.55776/I5982.

\bibliographystyle{abbrvnat}
\bibliography{lit}

\appendix

\section{Appendix}
\subsection{Proof of Lemma \ref{lem:b_f_omega_is_real}}

\begin{proof}
    To show that $b_f(\omega^t) \in \mathbb{R}$, we begin by observing the symmetry of the summand under conjugation. For each $l$, note that $\overline{\omega^{tl}} = \omega^{-tl}$ and $\overline{\sum_{k=0}^{n-1} \omega^{kl}} = \sum_{k=0}^{n-1} \omega^{-kl}$. Pairing terms with indices $l$ and $2n - l$ in the summation reveals that they are complex conjugates, as $\omega^{t(2n-l)} = \omega^{-tl}$ and the inner sum also matches its conjugate under this pairing.

    However, there is an additional consideration since we are taking the square root of the sum $\sum_{k=0}^{n-1} \omega^{kl}$, which is generally a complex number. To ensure that the square root is consistent under conjugation, we need to check that $\sum_{k=0}^{n-1} \omega^{kl}$ is never a negative real number (see $\overline{\sqrt{-1}} = -i \ne i = \sqrt{\overline{-1}})$.

    Consider $l \neq 0$. The inner sum can be written as:
    \begin{equation}
        \sum_{k=0}^{n-1} \omega^{kl} = \frac{1 - \omega^{ln}}{1 - \omega^l} = \frac{1 - (-1)^{l}}{1 - \omega^l}.
    \end{equation}
    The numerator is a positive real number, and the denominator $1 - \omega^l$ is a non-real complex number for $l \neq 0$. Therefore, the inner sum is never a negative real number, ensuring the square root is consistent under conjugation.

    Special cases $l = 0$ and $l = n$ are straightforward. When $l = 0$, the term is $\omega^{t \cdot 0} = 1$ with a purely real inner sum of $n$. For $l = n$, $\omega^{tn}$ contributes a real factor, and the inner sum is also real since $\omega^n = -1$.

    Therefore, every term in the summation is either real or part of a conjugate pair, ensuring that the full sum is real. Hence, $b_f(\omega^t) \in \mathbb{R}$.
\end{proof}

\subsection{Proof of Lemma \ref{lem:b_f_decomposition}}
\begin{proof}

Starting with the expression for $b_f(\omega^{-t})$:

\begin{equation*}
b_f(\omega^{-t}) = \frac{1}{2n} \sum_{l = 0}^{2n - 1} \omega^{-tl} \left( \sum_{k = 0}^{n - 1} \omega^{kl} \right)^{1/2} = \frac{1}{2\sqrt{n}} + \frac{1}{2n} \sum_{l = 1}^{2n - 1} \omega^{-tl} \left( \frac{1 - \omega^{nl}}{1 - \omega^l} \right)^{1/2}
\end{equation*}
where $\omega = e^{i\pi/n}$ is a $2n$-th root of unity. For even values of $l$, the term $1 - \omega^{nl}$ is equal to $0$. Therefore, the summation can be restricted to just odd values of $l$.
\begin{equation*}
b_f(\omega^{-t}) = \frac{1}{2\sqrt{n}} + \frac{1}{\sqrt{2}n} \sum_{l = 0}^{n - 1} \omega^{-t(2l + 1)} \frac{1}{(1 - \omega^{2l + 1})^{1/2}}
\end{equation*}
We can consider the power series of $(1 - x)^{-1/2}$, which has the following form:
\begin{equation*}
(1 - x)^{-1/2} = \sum_{k = 0}^\infty f_k x^k,
\end{equation*}
where $f_k = \left|\binom{-1/2}{k}\right|$ are the coefficients given by the binomial series. We substitute $\omega^{2l + 1}$ using Abel's test \footnote{\textbf{Abel's test}: If a sequence of positive real numbers $(a_n)$ is monotonically decreasing $(a_n \geq a_{n+1})$ and $\lim_{n \to \infty} a_n = 0$, then the power series $f(z) = \sum_{n=0}^\infty a_n z^n$ converges for all $|z| = 1$ except at $z = 1$.}, as the sequence $f_k$ is monotonically decreasing and satisfies $\lim\limits_{k \to \infty} f_k = 0$.

\begin{equation*}
b_f(\omega^{-t}) = \frac{1}{2\sqrt{n}} + \frac{1}{\sqrt{2}n} \sum_{l = 0}^{n - 1} \omega^{-t(2l + 1)} \sum\limits_{k = 0}^\infty f_k \omega^{k(2l + 1)}.
\end{equation*}

By changing the order of summation, we obtain:
\begin{equation*}
b_f(\omega^{-t}) = \frac{1}{2\sqrt{n}} + \frac{1}{\sqrt{2}n} \sum\limits_{k = 0}^\infty f_k \sum_{l = 0}^{n - 1} \omega^{(k-t)(2l + 1)} = \frac{1}{2\sqrt{n}} + \frac{1}{\sqrt{2}n} \sum\limits_{k = 0}^\infty f_k \omega^{k - t} \sum_{l = 0}^{n - 1} \omega^{2(k-t)l}.
\end{equation*}

If $k \equiv t \pmod{n}$, then the inner sum is $n$, and we sum over $t + nl$; otherwise, the inner sum is zero, as $1 - \omega^{2n(k - t)} = 0$. Therefore:
\begin{equation*}
b_f(\omega^{-t}) = \frac{1}{2\sqrt{n}} + \frac{1}{\sqrt{2}} \sum_{l = 0}^\infty f_{t + nl} (-1)^{l} = \frac{1}{2\sqrt{n}} + \frac{1}{\sqrt{2}} \sum_{l = 0}^\infty \left|\binom{-1/2}{t + nl}\right| (-1)^{l},
\end{equation*}
which completes the proof.
\end{proof}

\subsection{Proof of Lemma \ref{lem:b_f_decreasing}}
\begin{proof}
We compute the difference between consecutive values:
\begin{align}
b_f(\omega^{-t - 1}) - b_f(\omega^{-t}) &= \frac{1}{\sqrt{2}} \sum_{l = 0}^\infty f_{t + 1 + nl} (-1)^l - \frac{1}{\sqrt{2}} \sum_{l = 0}^\infty f_{t + nl} (-1)^l \notag \\
&= \frac{1}{\sqrt{2}} \sum_{l = 0}^\infty \big(f_{t + 1 + nl} - f_{t + nl}\big)(-1)^l,
\end{align}
where $f_k = \left| \binom{-1/2}{k} \right| = \frac{1}{4^k}\binom{2k}{k}$.

Let $g_k = f_k - f_{k + 1}$ denote the discrete derivative of the sequence $f_k$. Proving that $g_k$ is a decreasing sequence implies that the difference $b_f(\omega^{-t - 1}) - b_f(\omega^{-t})$ has the same sign as the first term of the alternating sum.

For $t \geq 0$, the first term $f_{t + 1} - f_t$ is negative, so the sequence is decreasing for $t \geq 0$. For $t = -1$, the first term $f_0 = 1$ is positive, and the next term $(f_n - f_{n - 1})(-1)$ is also positive. Thus, the alternating sum is positive. For $t < -1$, the first term is zero, and the sum starts from $(f_{t + 1 + n} - f_{t + n})(-1)$, which is positive. Therefore, the sequence increases for $t$ from $-n$ to $0$.

It remains to show that the sequence $g_k = f_k - f_{k + 1}$ is decreasing. Specifically, we need to prove the inequality:
\begin{equation}
f_k - f_{k + 1} \geq f_{k + 1} - f_{k + 2}.
\end{equation}
This is equivalent to
\begin{align}
\label{eq:f_k_second_derivative}
f_k - 2f_{k + 1} + f_{k + 2} &= \frac{1}{4^k} \binom{2k}{k} - \frac{2}{4^{k + 1}} \binom{2k + 2}{k + 1} + \frac{1}{4^{k + 2}} \binom{2k + 4}{k + 2} \notag \\
&= f_k \left(1 - \frac{1}{2} \frac{(2k + 1)(2k + 2)}{(k + 1)^2} + \frac{1}{4^2} \frac{(2k + 1)(2k + 2)(2k + 3)(2k + 4)}{(k + 1)^2 (k + 2)^2}\right) \notag \\
&= \frac{f_k}{4} \left(4 - 4 \frac{2k + 1}{k + 1} + \frac{(2k + 1)(2k + 3)}{(k + 1)(k + 2)}\right) \notag \\
&= \frac{f_k}{4(k + 1)(k + 2)} \big(4k^2 + 12k + 8 - 8k^2 - 20k - 8 + 4k^2 + 8k + 3\big) \notag \\
&= \frac{3f_k}{4(k + 1)(k + 2)} \geq 0.
\end{align}

This completes the proof.
\end{proof}

\subsection{Proof of Lemma \ref{lem:b_f_bounds}}
\begin{proof}
We begin with the decomposition of $b_f(\omega^{-t})$ from Lemma~\ref{lem:b_f_decomposition}:
\begin{equation*}
    b_f(\omega^{-t}) = \frac{1}{2\sqrt{n}} + \frac{1}{\sqrt{2}} \sum\limits_{l = 0}^{\infty} f_{t + nl} (-1)^l = \frac{1}{2\sqrt{n}} + \frac{1}{\sqrt{2}}(f_t - f_{t + n}) +  \frac{1}{\sqrt{2}} \sum\limits_{l = 1}^{\infty} \big(f_{t + 2nl} - f_{t + n(2l+1)}\big).
\end{equation*}

The bounds for $f_k$ are given by:
\begin{equation*}
    \frac{1}{\sqrt{\pi (k + 1)}} \leq f_k \leq \frac{1}{\sqrt{\pi k}}.
\end{equation*}

To bound the sum of differences, we continue as follows:
\begin{align*}
    \sum\limits_{l = 1}^\infty \big(f_{t + 2nl} - f_{t + n(2l+1)}\big) &\leq \frac{1}{\sqrt{\pi}} \sum\limits_{l = 1}^\infty \frac{1}{\sqrt{t + 2nl}} - \frac{1}{\sqrt{t + 2nl + n + 1}}\\
    &= \frac{1}{\sqrt{\pi}} \sum\limits_{l = 1}^\infty \frac{n + 1}{\sqrt{t + 2nl}\sqrt{t + 2nl + n + 1}\big(\sqrt{t + 2nl + n + 1} + \sqrt{t + 2nl}\big)}\\
    &\leq \frac{1}{2\sqrt{\pi}} \sum\limits_{l = 1}^\infty \frac{n + 1}{(t + 2nl)^{3/2}} = \frac{n + 1}{2\sqrt{\pi}(2n)^{3/2}} \sum\limits_{l = 1}^\infty \frac{1}{\big(\frac{t}{2n} + l\big)^{3/2}}\\
    &\leq \frac{n + 1}{4\sqrt{2\pi} n^{3/2}} \sum\limits_{l = 1}^\infty \frac{1}{(-\frac{1}{2} + l)^{3/2}} = \frac{(n + 1)(2\sqrt{2} - 1)\zeta(3/2)}{4\sqrt{2\pi} n^{3/2}}.
\end{align*}

Similarly, for the lower bound:
\begin{align*}
    \sum\limits_{l = 1}^\infty \big(f_{t + 2nl} - f_{t + n(2l+1)}\big) &\geq \frac{1}{\sqrt{\pi}} \sum\limits_{l = 1}^\infty \frac{1}{\sqrt{t + 2nl + 1}} - \frac{1}{\sqrt{t + 2nl + n}}\\
    &= \frac{1}{\sqrt{\pi}} \sum\limits_{l = 1}^\infty \frac{n - 1}{\sqrt{t + 2nl + 1}\sqrt{t + 2nl + n}\big(\sqrt{t + 2nl + n} + \sqrt{t + 2nl + 1}\big)}\\
    &\geq \frac{1}{2\sqrt{\pi}} \sum\limits_{l = 1}^\infty \frac{n - 1}{(t + 2nl + n)^{3/2}} = \frac{(n - 1)}{2\sqrt{\pi}(2n)^{3/2}} \sum\limits_{l = 1}^\infty \frac{1}{\big(\frac{t + n}{2n} + l\big)^{3/2}}\\
    &\geq \frac{n - 1}{4\sqrt{2\pi} n^{3/2}} \sum\limits_{l = 1}^\infty \frac{1}{(l + 1)^{3/2}} = \frac{(n - 1)(\zeta(3/2) - 1)}{4\sqrt{2\pi} n^{3/2}}.
\end{align*}

Substituting these bounds back into the expression for $b_f(\omega^{-t})$, we derive the following:
\begin{align*}
    b_f(\omega^{-t}) &\leq \frac{1}{2\sqrt{n}} + \frac{1}{\sqrt{2}}(f_t - f_{t + n}) +  \frac{1}{\sqrt{2}}\frac{(n + 1)\zeta(3/2)(2\sqrt{2} - 1)}{4\sqrt{2\pi} n^{3/2}}\\
    &\leq \frac{f_t}{\sqrt{2}} - \frac{f_{t + n}}{\sqrt{2}} + \frac{1}{\sqrt{n}} \left(\frac{1}{2} + \frac{\zeta(3/2)(2\sqrt{2} - 1)}{8\sqrt{\pi}}\right) + \frac{\zeta(3/2)(2\sqrt{2} - 1)}{8\sqrt{\pi}n^{3/2}}\\
    &\leq \frac{f_t}{\sqrt{2}} - \frac{f_{t + n}}{\sqrt{2}} + \frac{0.837}{\sqrt{n}} + \frac{0.337}{n^{3/2}}\\
    &\leq \frac{f_t}{\sqrt{2}} - \frac{f_{t + n}}{\sqrt{2}} +\frac{6}{7\sqrt{n}} + \frac{3}{8n^{3/2}}.
\end{align*}

For the lower bound:
\begin{align*}
    b_f(\omega^{-t}) &\geq \frac{1}{2\sqrt{n}} + \frac{1}{\sqrt{2}}(f_t - f_{t + n}) +  \frac{1}{\sqrt{2}}\frac{(n - 1)(\zeta(3/2) - 1)}{4\sqrt{2\pi} n^{3/2}}\\
    &\geq \frac{f_t}{\sqrt{2}} - \frac{f_{t + n}}{\sqrt{2}} + \frac{1}{\sqrt{n}} \left(\frac{1}{2} + \frac{\zeta(3/2) - 1}{8\sqrt{\pi}}\right) - \frac{\zeta(3/2) - 1}{8\sqrt{\pi}n^{3/2}}\\
    &\geq \frac{f_t}{\sqrt{2}} - \frac{f_{t + n}}{\sqrt{2}} + \frac{0.614}{\sqrt{n}} - \frac{0.114}{n^{3/2}}\\
    &\geq \frac{f_t}{\sqrt{2}} - \frac{f_{t + n}}{\sqrt{2}} + \frac{4}{7\sqrt{n}} - \frac{1}{8n^{3/2}},
\end{align*}
completing the proof.
\end{proof}

\subsection{Proof of Lemma \ref{lem:L_last_row}}
\begin{proof}
The monotonicity follows directly from Lemma \ref{lem:b_f_decreasing}. The remaining task is to establish four bounds:  
$b_f(\omega^{0}) \leq 1$, $b_f(\omega^{1}) \leq 1$, $b_f(\omega^{-n + 1}) \geq 0$, and $b_f(\omega^{n}) \geq -1$.  

For the first inequality, we use the bound from Lemma \ref{lem:b_f_bounds}:  
\begin{equation}
    b_f(\omega^0) \leq \frac{1}{\sqrt{2}} - \frac{f_n}{\sqrt{2}} + \frac{6}{7\sqrt{n}} + \frac{3}{8n^{3/2}} 
    \leq \frac{1}{\sqrt{2}} - \frac{1}{\sqrt{2\pi (n + 1)}} + \frac{6}{7\sqrt{n}} + \frac{3}{8n^{3/2}} 
    \leq 1,
\end{equation}
where the last inequality holds for $n \geq 5$. For $1 \leq n \leq 4$, we verify numerically that $b_f(\omega^{0}) \leq 1$.  

The second inequality follows from the first, combined with Lemma \ref{lem:b_f_decreasing}, since $b_f(\omega^0) \geq b_f(\omega^{1})$.  

The third and fourth inequalities follow trivially from Lemma \ref{lem:b_f_bounds}:  
\begin{align}
   & b_f(\omega^{-n + 1}) \geq \frac{f_{n -1}}{\sqrt{2}} - \frac{f_{2n - 1}}{\sqrt{2}} + \frac{4}{7\sqrt{n}} - \frac{1}{8n^{3/2}} 
   \geq \frac{4}{7\sqrt{n}} - \frac{1}{8n^{3/2}} > 0,\\
   & b_f(\omega^{n}) \geq -\frac{1}{\sqrt{2}} + \frac{4}{7\sqrt{n}} - \frac{1}{8n^{3/2}} 
   \geq -\frac{1}{\sqrt{2}} \geq -1.
\end{align}
This concludes the proof.
\end{proof}

\subsection{Proof of Lemma \ref{lem:b_f_matrix}}
\begin{proof}
    We start by splitting the matrices $L$ and $R$ into four $n\times n$ matrices: $L_1, L_2, R_1$, and $R_2$. We first notice that $L_1 = R_1$ and $L_2 = R_2$:

    \begin{equation}
        L_1 = R_1 = \begin{pmatrix}
    b_f(\omega^0) & b_f(\omega^{1}) & \dots & b_f(\omega^{n - 1})\\
    b_f(\omega^{-1}) & b_f(\omega^{0}) & \dots & b_f(\omega^{n - 2})\\
    \vdots & \vdots & \ddots & \vdots\\
    b_f(\omega^{-n + 1}) & b_f(\omega^{1}) & \dots & b_f(\omega^{0})\\
\end{pmatrix}.
    \end{equation}

    \begin{equation*}
        L_2 = \begin{pmatrix}
    b_f(\omega^n) & b_f(\omega^{n + 1}) & \dots & b_f(\omega^{2n - 1})\\
    b_f(\omega^{n - 1}) & b_f(\omega^{n}) & \dots & b_f(\omega^{2n - 2})\\
    \vdots & \vdots & \ddots & \vdots\\
    b_f(\omega^{1}) & b_f(\omega^{2}) & \dots & b_f(\omega^{n})\\
\end{pmatrix}, \quad R_2 = \begin{pmatrix}
    b_f(\omega^{-n}) & b_f(\omega^{-n + 1}) & \dots & b_f(\omega^{- 1})\\
    b_f(\omega^{-n - 1}) & b_f(\omega^{-n}) & \dots & b_f(\omega^{- 2})\\
    \vdots & \vdots & \ddots & \vdots\\
    b_f(\omega^{-2n + 1}) & b_f(\omega^{-2n + 2}) & \dots & b_f(\omega^{-n})\\
\end{pmatrix}.
    \end{equation*}

The values $b_f(\omega^{n + k}) = b_f(\omega^{-n + k})$ for all $k$ since $\omega$ is a $2n$-th root of unity and therefore periodic. This implies that $L_2$ and $R_2$ are equal. 

Next, we establish a connection between the matrices $R_1$ and $R_2$ using the following identity, which we prove for all $k$:

\begin{equation}
\label{eq:b_f_n_plus_k}
    b_f(\omega^{n + k}) - \frac{1}{2\sqrt{n}} = - \left(b_f(\omega^{k}) - \frac{1}{2\sqrt{n}}\right). 
\end{equation}

Using the definition of $b_f(\omega^{k})$, we show that

\begin{align}
    b_f(\omega^{n + k}) + b_f(\omega^{k}) &= \frac{1}{2n} \sum_{l = 0}^{2n - 1} \omega^{(n+k)l} \left( \sum_{k = 0}^{n - 1} \omega^{kl} \right)^{1/2} + \frac{1}{2n} \sum_{l = 0}^{2n - 1} \omega^{kl} \left( \sum_{k = 0}^{n - 1} \omega^{kl} \right)^{1/2}\\
    &= \frac{1}{2n} \sum_{l = 0}^{2n - 1} (\omega^{(n+k)l} + \omega^{kl}) \left( \sum_{k = 0}^{n - 1} \omega^{kl} \right)^{1/2}\\ 
    &= \frac{1}{\sqrt{n}} + \frac{1}{2n} \sum_{l = 1}^{2n - 1} \omega^{kl}(1 + (-1)^{l}) \left( \frac{1 - (-1)^{l}}{1 - \omega^{l}} \right)^{1/2} =  \frac{1}{\sqrt{n}},
\end{align}
concluding the proof of equality \eqref{eq:b_f_n_plus_k}. Subtracting $\frac{E}{2\sqrt{n}}$ from both $R_1$ and $R_2$ shows that they sum to zero, meaning we can define the matrix $C$ such that

\begin{equation}
    R_1 = \frac{E}{2\sqrt{n}} + \frac{C}{2}, \quad \text{and} \quad R_2 = \frac{E}{2\sqrt{n}} - \frac{C}{2},
\end{equation}
which concludes the first part of the lemma. The second part follows by computing $C$ from the factorization:

\begin{equation}
    M = \left( \frac{E}{2\sqrt{n}} + \frac{C}{2}\right)^2 + \left( \frac{E}{2\sqrt{n}} - \frac{C}{2}\right)^2 = \frac{C^2}{2} + \frac{E}{2},
\end{equation}
where for the second equality we used the property $E^2 = nE$. From this it follows that $C = (2M - E)^{1/2}$. 

As a corollary, this representation implies Lemma \ref{lem:b_f_decomposition}. To show it we first need to introduce an auxiliary matrix $A$:
\begin{equation}
    (A_{i,j}) = \begin{cases}
        1, & \text{if } i - j = 1,\\
        -1, & \text{if } i = 0, j = n-1,\\
        0, & \text{otherwise}.
    \end{cases}
\end{equation}

It follows that:
\begin{equation}
    2M - E = 2 (I - A)^{-1},
\end{equation}
which we can verify by direct matrix multiplication:
\begin{equation}
    \begin{pmatrix}
        1 &-1 & \dots &-1 &-1\\
        1  &1 & \dots & -1&-1\\
        \vdots &\vdots & \ddots &\vdots &\vdots\\
        1 & 1 &\dots & 1 & 1
    \end{pmatrix} \times \begin{pmatrix}
        1 &0 & \dots &0 &1\\
        -1  &1 & \dots & 0&0\\
        \vdots &\vdots & \ddots &\vdots &\vdots\\
        0 & 0 &\dots & -1 & 1
    \end{pmatrix} = 2I.
\end{equation}

Thus, we express $C$ as:
\begin{equation}
\label{eq:C_matrix_series}
    C = \sqrt{2} (I - A)^{-1/2} = \sqrt{2} \sum\limits_{l = 0}^{\infty} f_k A^k,
\end{equation}
which, when divided by $2$ and adding $\frac{E}{2\sqrt{n}}$, gives the componentwise formula from Lemma \ref{lem:b_f_decomposition}.

\end{proof}

\subsection{Proof of Lemma \ref{lem:group_algebra_matrix_norms}}

\begin{proof}
The squared norm of any column of matrix $R$ and any row of matrix $L$ can be found as
    \begin{align}
        \sum\limits_{k = 0}^{2n - 1} b_f^2(\omega^{-k}) &= \frac{1}{4n^2} \sum\limits_{k = 0}^{2n - 1} \left(\sum\limits_{l = 0}^{2n - 1} \omega^{-kl} \left(\sum\limits_{t = 0}^{n - 1}\omega^{tl}\right)^{1/2}\right)^2\\
        &=\frac{1}{4n^2} \sum\limits_{k = 0}^{2n - 1} \left(\sqrt{n} + \sqrt{2}\sum\limits_{l = 0}^{n - 1} \frac{\omega^{-k(2l + 1)}}{(1 - \omega^{2l + 1})^{1/2}} \right)^2\\
        &= \frac{1}{4n^2} \sum\limits_{k = 0}^{2n - 1} \left[n + 2\sqrt{2n}\sum\limits_{l = 0}^{n - 1} \frac{\omega^{-k(2l + 1)}}{(1 - \omega^{2l + 1})^{1/2}} + 2\left(\sum\limits_{l = 0}^{n - 1} \frac{\omega^{-k(2l + 1)}}{(1 - \omega^{2l + 1})^{1/2}}\right)^2\right]    
    \end{align}

There are three terms. The first term trivially evaluates to $\frac{1}{2}$, while the second term can be shown to be $0$: 

\begin{equation}
    \sum\limits_{k = 0}^{2n - 1} \sum\limits_{l = 0}^{n - 1} \frac{\omega^{-k(2l + 1)}}{(1 - \omega^{2l + 1})^{1/2}} =  \sum\limits_{l = 0}^{n - 1} \frac{1}{(1 - \omega^{2l + 1})^{1/2}}\sum\limits_{k = 0}^{2n - 1}\omega^{-k(2l + 1)} = 0.
\end{equation}

Therefore, 

\begin{align}
        \sum\limits_{k = 0}^{2n - 1} b_f^2(\omega^{-k}) &= \frac{1}{2} + \frac{1}{2n^2}  \sum\limits_{k = 0}^{2n - 1} \left(\sum\limits_{l = 0}^{n - 1} \frac{\omega^{-k(2l + 1)}}{(1 - \omega^{2l + 1})^{1/2}}\right)^2\\
        &= \frac{1}{2} + \frac{1}{2n^2}  \sum\limits_{k = 0}^{2n - 1} \sum\limits_{l_1 = 0}^{n - 1}\sum\limits_{l_2 = 0}^{n - 1} \frac{\omega^{-k(2l_1 + 1)}}{(1 - \omega^{2l_1 + 1})^{1/2}} \frac{\omega^{-k(2l_2 + 1)}}{(1 - \omega^{2l_2 + 1})^{1/2}}\\
        &= \frac{1}{2} + \frac{1}{2n^2}   \sum\limits_{l_1 = 0}^{n - 1}\sum\limits_{l_2 = 0}^{n - 1} \frac{1}{(1 - \omega^{2l_1 + 1})^{1/2}(1 - \omega^{2l_2 + 1})^{1/2}}\sum\limits_{k = 0}^{2n - 1} \omega^{-k(2l_1 + 2l_2 +  2)}
\end{align}

The latter sum equals $0$ as long as $2l_1 + 2l_2 + 2 \neq 2n$. Otherwise, summing over $l_1$ with $l_2 = n - 1 - l_1$, we denote the summation variable as $l$ again, which gives:

\begin{align}
        \sum\limits_{k = 0}^{2n - 1} b_f^2(\omega^{-k}) &= \frac{1}{2} + \frac{1}{n} \sum\limits_{l = 0}^{n - 1} \frac{1}{(1 - \omega^{2l + 1})^{1/2}(1 - \omega^{2(n - 1 - l) + 1})^{1/2}}\\
        &= \frac{1}{2} + \frac{1}{n} \sum\limits_{l = 0}^{n - 1} \frac{1}{(1 - \omega^{2l + 1})^{1/2}(1 - \omega^{-2l - 1})^{1/2}}\\
        &= \frac{1}{2} + \frac{1}{n} \sum\limits_{l = 0}^{n - 1} \frac{\omega^{l + 1/2}}{i(1 - \omega^{2l + 1})} = \frac{1}{2} + \frac{1}{n} \sum\limits_{l = 0}^{n - 1} \frac{1}{i(\omega^{-l - 1/2} - \omega^{l + 1/2})}\\
        &= \frac{1}{2} + \frac{1}{2n} \sum\limits_{l = 1}^{n} \frac{1}{\sin\left(\frac{\pi}{2n}(2l - 1)\right)} <1 + \frac{\log n}{\pi},
\end{align}
where the final bound has been proved by \cite{dvijotham2024efficient}, concluding the proof.

\end{proof}

\subsection{Proof of Theorem \ref{thm:matrix_perturbation}}

\begin{proof}
We follow the proof technique of \cite{andersson2024streaming}, adapting it to the case where the matrix $L$ is not square but $L \in \mathbb{R}^{n \times 2n}$, with invertible submatrices $L_1$ and $L_2$, which may contain negative elements. We begin by proving some fundamental properties of the $(\eta, \mu)$-perturbation of such a matrix.

\begin{fact}
\label{fact:pertubationbound}
\begin{align}
    &\|L + P\|_{F} \leq (1 + \eta) \|L\|_{F} + \mu n \sqrt{2}, \\
    &\|L + P\|_{2\to \infty} \leq (1 + \eta) \|L\|_{2\to \infty} + \mu \sqrt{2n}.
\end{align}    
\end{fact}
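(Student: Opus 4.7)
The plan is to reduce both bounds to the triangle inequality applied to the respective norm, followed by a Minkowski-type split of the perturbation bound $|P_{i,j}| \le \eta |L_{i,j}| + \mu$ into an $\eta$-scaled copy of $L$ and a constant-$\mu$ matrix. Both the Frobenius norm and the $2\to\infty$ norm are genuine norms on $\mathbb{R}^{n\times 2n}$, so $\|L+P\| \le \|L\| + \|P\|$ in each case, reducing the task to bounding $\|P\|_F$ and $\|P\|_{2\to\infty}$ in terms of $\|L\|_F$, $\|L\|_{2\to\infty}$, and the perturbation parameters.

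For the Frobenius bound, I would define entrywise $Q_{i,j} = \eta |L_{i,j}|$ and $S_{i,j} = \mu$, so that $|P_{i,j}| \le Q_{i,j} + S_{i,j}$. Since $P_{i,j}^2 \le (Q_{i,j}+S_{i,j})^2$, summing and taking square roots yields $\|P\|_F \le \|Q+S\|_F \le \|Q\|_F + \|S\|_F$ by the ordinary Minkowski inequality in $\mathbb{R}^{2n^2}$. Then $\|Q\|_F = \eta \|L\|_F$ and $\|S\|_F = \mu\sqrt{n\cdot 2n} = \mu n\sqrt{2}$, giving $\|P\|_F \le \eta\|L\|_F + \mu n\sqrt{2}$ and hence the stated bound after adding $\|L\|_F$.

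For the $2\to\infty$ bound, I would argue row-by-row. For each row index $i \in [n]$, the same Minkowski step applied to the $2n$ entries of row $i$ gives $\|(P)_{i,\cdot}\|_2 \le \eta \|(L)_{i,\cdot}\|_2 + \mu\sqrt{2n}$. Taking the maximum over $i$ yields $\|P\|_{2\to\infty} \le \eta\|L\|_{2\to\infty} + \mu\sqrt{2n}$, and combining with $\|L+P\|_{2\to\infty}\le \|L\|_{2\to\infty} + \|P\|_{2\to\infty}$ finishes the second inequality.

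There is no real obstacle here: both bounds are pure norm-triangle/Minkowski manipulations, and the only detail to keep straight is the dimension count (the matrix has $n\cdot 2n = 2n^2$ entries and each row has $2n$ entries), which is what produces the $\sqrt{2}$ factors in $\mu n\sqrt{2}$ and $\mu\sqrt{2n}$, distinguishing this statement from the square case of Andersson--Pagh. The fact then serves as the input to the subsequent bookkeeping that converts the perturbation of $L$ into a bound on $\mathrm{MeanSE}$ and $\mathrm{MaxSE}$ of the full factorization.
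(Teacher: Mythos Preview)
Your proof is correct and matches the paper's approach: both use the triangle inequality $\|L+P\|\le\|L\|+\|P\|$ together with the entrywise domination $|P|\le \eta|L|+\mu J_{n\times 2n}$ and a Minkowski split, with the dimension $n\times 2n$ producing the factors $\mu n\sqrt{2}$ and $\mu\sqrt{2n}$. The paper just writes this in one line per bound using the all-ones matrix $J_{n\times 2n}$ explicitly, but the content is identical.
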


The first bound follows from  
\begin{equation}
    \|L + P\|_{F}  \leq \|L\|_{F} + \|\eta |L| + \mu J_{n \times 2n}\|_F \leq (1 + \eta) \|L\|_{F} + \mu n \sqrt{2},
\end{equation}
where $J_{n\times 2n}$ is an all-ones matrix of size $n \times 2n$, with Frobenius norm $\sqrt{2}n$. The second bound is proved almost identically, using the fact that $\|J_{n \times 2n}\|_{2\to \infty} = \sqrt{2n}$.

\begin{fact}
\label{fact:boundonPertubation}
\begin{equation}
    \|P_i\|_2 \leq \eta \|L_i\|_F + \mu n, \quad i \in \{1, 2\}.
\end{equation}    
\end{fact}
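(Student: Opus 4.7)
The plan is to bound the spectral norm via the Frobenius norm and then apply the defining inequality of an $(\eta,\mu)$-perturbation componentwise, in direct analogy with the proof of Fact~\ref{fact:pertubationbound} that was just carried out.

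First, I would invoke the standard matrix-norm inequality $\|P_i\|_2 \leq \|P_i\|_F$, valid for every matrix, which reduces the task to bounding the Frobenius norm of the $n \times n$ block $P_i$. Next, I would use the componentwise bound $|P_{j,k}| \leq \eta |L_{j,k}| + \mu$ that defines the $(\eta,\mu)$-perturbation. Restricted to the columns that make up the block $P_i$ (for $i \in \{1,2\}$), this becomes $|(P_i)_{j,k}| \leq \eta |(L_i)_{j,k}| + \mu$, i.e., the entrywise matrix inequality $|P_i| \leq \eta |L_i| + \mu J_{n \times n}$, where $J_{n \times n}$ is the all-ones $n \times n$ matrix. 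Taking Frobenius norms of both sides and applying the triangle inequality yields
\begin{equation*}
\|P_i\|_F \leq \eta \,\|L_i\|_F + \mu\, \|J_{n \times n}\|_F = \eta\, \|L_i\|_F + \mu n,
\end{equation*}
since $\|J_{n \times n}\|_F = n$. Chaining with $\|P_i\|_2 \leq \|P_i\|_F$ delivers the claimed bound.

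The main obstacle is essentially bookkeeping: I just need to be careful that $L_i$ and $P_i$ are $n \times n$ rather than the full $n \times 2n$ matrix, so the all-ones block contributes a factor of $n$ (and not $\sqrt{2}\,n$ as in Fact~\ref{fact:pertubationbound}). Otherwise the argument is routine and slots in immediately after the proof of Fact~\ref{fact:pertubationbound} within the overall proof of Theorem~\ref{thm:matrix_perturbation}.
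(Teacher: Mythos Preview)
Your proposal is correct and essentially identical to the paper's proof, which writes the same chain $\|P_i\|_2 \leq \|P_i\|_F = \||P_i|\|_F \leq \|\eta|L_i| + \mu J_{n\times n}\|_F \leq \eta\|L_i\|_F + \mu n$ in a single line. The bookkeeping you flagged about the $n\times n$ block size (yielding $\mu n$ rather than $\mu\sqrt{2}\,n$) is exactly the only point of care.
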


We prove it as follows:
\begin{equation}
    \|P_i\|_2 \leq \|P_i\|_{F} = \||P_i|\|_{F} \leq \|\eta |L_i| + \mu J_{n\times n}\|_F \leq \eta\|L_i\|_F + \mu n.
\end{equation}

\begin{fact}
\label{fact:rownormBound}
\begin{equation}
    \|\hat{R}\|_{1\to 2} \leq \|R\|_{1\to 2} \Big(1 + 2 \max_{i \in \{1, 2\}} (\|P_i\|_{2} \|L_i^{-1}\|_{2})\Big).
\end{equation}    
\end{fact}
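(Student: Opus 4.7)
The plan is to first reduce the bound on $\|\hat R\|_{1\to 2}$ to per-block bounds on the two $n\times n$ submatrices $\hat R_i := (L_i+P_i)^{-1} L_i R_i$, $i\in\{1,2\}$, which are the vertical blocks of $\hat R$. The key algebraic identity is
\[
(L_i+P_i)^{-1} L_i \;=\; I - (L_i+P_i)^{-1} P_i,
\]
obtained by writing $L_i = (L_i+P_i) - P_i$ and left-multiplying by $(L_i+P_i)^{-1}$. Hence $\hat R_i = R_i - (L_i+P_i)^{-1} P_i R_i$, and for any column index $j$ the triangle inequality and submultiplicativity $\|Ax\|_2 \le \|A\|_2 \|x\|_2$ give
\[
\|\hat R_i e_j\|_2 \;\le\; \|R_i e_j\|_2 \bigl(1 + \|(L_i+P_i)^{-1} P_i\|_2\bigr).
\]

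Next, I would bound $\|(L_i+P_i)^{-1}\|_2$ via a Neumann-series argument. Factoring $L_i + P_i = L_i(I + L_i^{-1} P_i)$ and using $\|L_i^{-1} P_i\|_2 \le \|L_i^{-1}\|_2 \|P_i\|_2 < 1$ yields
\[
\|(L_i+P_i)^{-1}\|_2 \;\le\; \frac{\|L_i^{-1}\|_2}{1 - \|L_i^{-1}\|_2 \|P_i\|_2}.
\]
In the regime $\|L_i^{-1}\|_2 \|P_i\|_2 \le 1/2$, which is ensured by the parameter choices of Theorem \ref{thm:matrix_perturbation} together with Fact \ref{fact:boundonPertubation}, this simplifies to $\|(L_i+P_i)^{-1}\|_2 \le 2\|L_i^{-1}\|_2$, so by submultiplicativity $\|(L_i+P_i)^{-1} P_i\|_2 \le 2 \|L_i^{-1}\|_2 \|P_i\|_2$. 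Substituting back gives the per-block estimate
\[
\|\hat R_i e_j\|_2 \;\le\; \|R_i e_j\|_2 \bigl(1 + 2\|L_i^{-1}\|_2 \|P_i\|_2\bigr).
\]

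Finally, I would lift this per-block bound to the full matrix using the vertical stacking $\hat R = (\hat R_1, \hat R_2)^T$, under which $\|\hat R e_j\|_2^2 = \|\hat R_1 e_j\|_2^2 + \|\hat R_2 e_j\|_2^2$, and likewise for $R$. Replacing each block contribution by its bound and factoring out the worst case yields
\[
\|\hat R e_j\|_2^2 \;\le\; \bigl(1 + 2\max_{i\in\{1,2\}}\|L_i^{-1}\|_2\|P_i\|_2\bigr)^{\!2} \bigl(\|R_1 e_j\|_2^2 + \|R_2 e_j\|_2^2\bigr),
\]
and taking square roots and the maximum over $j$ gives the claimed inequality.

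The main obstacle I anticipate is not the algebraic manipulation but rather ensuring the smallness condition $\|L_i^{-1}\|_2 \|P_i\|_2 \le 1/2$ that is needed to convert the Neumann-series denominator $1 - \|L_i^{-1}\|_2 \|P_i\|_2$ into the clean factor $2$ appearing in the statement; this is implicitly supplied by the surrounding proof context of Theorem \ref{thm:matrix_perturbation} via the specific scaling of $\eta$ and $\mu$. The remaining steps are routine applications of operator-norm submultiplicativity and the column-wise decomposition induced by the block structure of $\hat R$.
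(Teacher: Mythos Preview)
Your proof is correct and follows essentially the same approach as the paper: the paper packages the two blocks into a single block-diagonal matrix $G=\operatorname{diag}\bigl((L_1+P_1)^{-1}L_1,\ (L_2+P_2)^{-1}L_2\bigr)$ and bounds $\|\hat R\|_{1\to2}\le\|G\|_2\|R\|_{1\to2}$, whereas you work column-by-column and then recombine via $\|\hat R e_j\|_2^2=\|\hat R_1 e_j\|_2^2+\|\hat R_2 e_j\|_2^2$, but the core steps---the identity $(L_i+P_i)^{-1}L_i=I-(L_i+P_i)^{-1}P_i$, the Neumann bound under $\|L_i^{-1}\|_2\|P_i\|_2\le 1/2$, and the resulting factor $1+2\|L_i^{-1}\|_2\|P_i\|_2$---are identical.
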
 

The proof differs slightly from Lemma 3 of \cite{andersson2024streaming}. We first express $\hat{R}$ in matrix form as:
\begin{equation}
    \hat{R} = \begin{pmatrix}
        (L_1 + P_1)^{-1}L_1 & \mathbf{0} \\
        \mathbf{0} & (L_2 + P_2)^{-1}L_2
    \end{pmatrix} 
    \begin{pmatrix}
        R_1 \\
        R_2
    \end{pmatrix} = G R.
\end{equation}

Then, we bound the norm as
\begin{equation}
    \|\hat{R}\|_{1\to2} = \sup_{\|x\|_{1} \leq 1} \|GRx\|_{2} \leq \|G\|_2 \sup_{\|x\|_{1} \leq 1} \|Rx\|_{2} = \|G\|_{2} \|R\|_{1\to 2}.
\end{equation}

The remaining step is to bound the spectral norm of the matrix $G$. Since $G$ is a block matrix, its maximum singular value is the maximum of the singular values of $(L_1 + P_1)^{-1}L_1$ and $(L_2 + P_2)^{-1}L_2$, so:
\begin{equation}
    \|G\|_{2} = \max \Big( \|(L_1 + P_1)^{-1}L_1\|_2, \|(L_2 + P_2)^{-1}L_2\|_2 \Big).
\end{equation}

We bound the first term, with the second term being bounded analogously:
\begin{align}
    \|(L_1 + P_1)^{-1}L_1\|_2 &= \|I_n - (L_1 + P_1)^{-1}P_1\|_2 \leq 1 + \|(L_1 + P_1)^{-1}P_1\|_2 \\
    &\leq 1 + \|L_{1}^{-1}\|_{2} \|P_1\|_2 \|(I_n + L_1^{-1}P_1)^{-1}\|_2.
\end{align}

To proceed, we need a bound $\|L_1^{-1}P_1\|_{2} \leq \frac{1}{2}$, which we prove as follows, using Fact \ref{fact:boundonPertubation}, and a specific choice of values $\eta$ and $\mu$:

\begin{equation}
    \|L_1^{-1}P_1\|_{2} \leq \|L_1^{-1}\|_2 \|P_1\|_{2} \leq \|L_1^{-1}\|_2 (\eta \|L_1\|_F + \mu n) \leq \frac{2\zeta}{17} \leq \frac{1}{2}.
\end{equation}

We then obtain the bound:
\begin{align}
    \|(I_n + L_1^{-1}P_1)^{-1}\|_2 &= \left\|I_n + \sum\limits_{k = 1}^{\infty} (L_1^{-1}P_1)^k(-1)^k \right\|_2 \leq 1 + \sum\limits_{k = 1}^{\infty} \|(L_1^{-1}P_1)^k\|_2 \\
    &\leq 1 + \sum\limits_{k = 1}^{\infty} \|(L_1^{-1}P_1)\|^k_2 \leq 2.
\end{align}

Thus, we conclude:

\begin{equation}
    \|G\|_2 = \max_{i \in \{1, 2\}} \|(L_i + P_i)^{-1}L_i\|_2 \leq 1 + 2 \max_i (\|L_{i}^{-1}\|_{2} \|P_i\|_2),
\end{equation}
which completes the proof.

\textbf{Main Proof.}  
We now proceed with the error bounds:

\begin{equation}
\label{eq:meanse}
    \frac{\text{MeanSE}(\hat{L}, \hat{R})}{\text{MeanSE}(L, R)} = \frac{\|\hat{L}\|^2_F \|\hat{R}\|^2_{1\to 2}}{\|L\|_F^2 \|R\|_{1 \to 2}^2} \leq \left(1 + 2 \max(\|P_1\|_{2} \|L_1^{-1}\|_{2}, \|P_2\|_{2} \|L_2^{-1}\|_{2})\right)^2 \times \left(1 + \eta + \frac{\mu \sqrt{2}n}{\|L\|_F}\right)^2
\end{equation}
by using Fact \ref{fact:pertubationbound} and Fact \ref{fact:rownormBound}. Then, we use the bound for $\|P_i\|_2$ from Fact \ref{fact:boundonPertubation}:
\begin{equation*}
    \frac{\text{MeanSE}(\hat{L}, \hat{R})}{\text{MeanSE}(L, R)} \leq \left(1 + 2 \max((\eta\|L_1\|_F + \mu n) \|L_1^{-1}\|_{2}, (\eta\|L_2\|_F + \mu n) \|L_2^{-1}\|_{2})\right)^2 \times \left(1 + \eta + \frac{\mu \sqrt{2}n}{\|L\|_F}\right)^2.
\end{equation*}

We can bound the maximum of two sums as the sum of maximums. For convenience, let us define  
\begin{equation}
    \psi_{L} = \max(\|L_1\|_F \|L_1^{-1}\|_2, \|L_2\|_F \|L_2^{-1}\|_2), \quad
    \chi_{L} = \max(\|L_1^{-1}\|_2,\|L_2^{-1}\|_2),
\end{equation}
then:
\begin{equation}
    \frac{\text{MeanSE}(\hat{L}, \hat{R})}{\text{MeanSE}(L, R)} \leq \left(1 + 2 \eta \psi_L + 2\mu n \chi_L\right)^2 \times \left(1 + \eta + \frac{\mu \sqrt{2}n}{\|L\|_F}\right)^2.
\end{equation}

We now substitute the values for $\eta = \frac{\zeta}{17\psi_L}$ and $\mu = \frac{\zeta}{17n\chi_L}$, which gives:

\begin{equation}
    \frac{\text{MeanSE}(\hat{L}, \hat{R})}{\text{MeanSE}(L, R)} \leq \left(1 + \frac{4\zeta}{17}\right)^2 \times \left(1 + \frac{\zeta}{17\psi_L} + \frac{\zeta\sqrt{2}}{17\chi_L\|L\|_F}\right)^2.
\end{equation}

Next, we use the inequality

\begin{equation}
    1 = \|I_n\|_2 \leq \|L_i\|_2\|L_i^{-1}\|_2 \leq \|L_i\|_{F}\|L_i^{-1}\|_2 \leq \|L\|_F \|L_i^{-1}\|_{2}
\end{equation}
to show that $\psi_L \geq 1$ and $\|L\|_F \chi_L \geq 1$. Therefore,

\begin{equation}
    \frac{\text{MeanSE}(\hat{L}, \hat{R})}{\text{MeanSE}(L, R)} \leq \left(1 + \frac{4\zeta}{17}\right)^2 \times \left(1 + \frac{\zeta(1 + \sqrt{2})}{17}\right)^2 \leq 1 + \zeta.
\end{equation}

This holds for $\zeta \leq 1.03736$, and in particular, for $\zeta \leq 1$, which concludes the proof of the bound for MeanSE.
For the MaxSE error, we have:  

\begin{equation}
     \frac{\text{MaxSE}(\hat{L}, \hat{R})}{\text{MaxSE}(L, R)} = \frac{\|\hat{L}\|^2_{2 \to \infty}\|\hat{R}\|^2_{1 \to 2}}{\|L\|^2_{2 \to \infty}\|R\|^2_{1 \to 2}} \leq \left(1 + \eta + \frac{\mu\sqrt{2n}}{\|L\|_{2\to \infty}}\right)^2 \frac{\|\hat{R}\|^2_{1 \to 2}}{\|R\|^2_{1 \to 2}}
\end{equation}
by Fact \ref{fact:pertubationbound}. We then use the bound  $\|L\|_{2 \to \infty} \geq \frac{1}{\sqrt{n}}\|L\|_F$, 
which brings us back to the same inequality \eqref{eq:meanse}, concluding the proof.

\end{proof}

\subsection{Proof of Lemma \ref{lem:inv_norm_bound}}

\begin{proof}
We begin with the matrix representations of $L_1$ and $L_2$, as stated in Lemma \ref{lem:b_f_matrix} and its corollary given by equation \eqref{eq:C_matrix_series}:
\begin{equation}
    (L_1, L_2) = \begin{pmatrix}
        \frac{E}{2\sqrt{n}} + \frac{C}{2}  & \frac{E}{2\sqrt{n}} - \frac{C}{2}
    \end{pmatrix},
\end{equation}
where $E$ is the $n \times n$ all-ones matrix, and 
\begin{equation}
    C = \sqrt{2} (I - A)^{-1/2} = \sqrt{2} \sum\limits_{l = 0}^{\infty} f_k A^k,
\end{equation}
where the matrix $A$ is defined as  
\begin{equation}
    (A_{i,j}) = \begin{cases}
        1, & \text{if } i - j = 1,\\
        -1, & \text{if } i = 0, j = n-1,\\
        0, & \text{otherwise}.
    \end{cases}
\end{equation}
To express $L_1^{-1}$ and $L_2^{-1}$, we use the Sherman–Morrison formula, since $E = ee^T$ is a rank-one perturbation of $\pm C$:
\begin{equation}
    L_i^{-1} = 2 \left(\pm C^{-1} - \frac{C^{-1}ee^TC^{-1}}{\sqrt{n} \pm e^TC^{-1}e}\right).
\end{equation}
Thus, the spectral norm is bounded by:
\begin{equation}
    \max_{i \in \{1, 2\}}\|L_i^{-1}\|_2 \le 2 \|C^{-1}\|_2 + \frac{2\|C^{-1}ee^TC^{-1}\|_2}{\min(|e^TC^{-1}e -\sqrt{n}|, |e^TC^{-1}e +\sqrt{n}|)}.
\end{equation}
To bound the spectral norm, we will bound these three terms separately. Before proceeding with the bounds, we compute the values of the matrix $C^{-1}$. as it is the main object in the aforementioned formula:
\begin{equation}
    C^{-1} = \begin{pmatrix}
        q_0 & q_{-1} & \dots & q_{-n + 1}\\
        q_1  & q_0 & \dots & q_{-n + 2}\\
        \vdots & \vdots & \ddots & \vdots\\
        q_{n -1} & q_{n - 2} & \dots & q_{0}
    \end{pmatrix},
\end{equation}

with elements $q_t$ expressed by the following formula:

\begin{equation}
    q_t = \frac{1}{\sqrt{2}}\sum\limits_{l  =0}^{\infty} \tilde{f}_{t + nl} (-1)^l
\end{equation}
where $\tilde{f}_k = (-1)^k\binom{1/2}{k} = \frac{-1}{2k - 1}f_k$. The proof of this equation follows from the following matrix series:

\begin{equation}
    C^{-1} = \frac{1}{\sqrt{2}} (I - A)^{1/2} = \frac{1}{\sqrt{2}} \sum\limits_{k = 0}^{\infty} \tilde{f}_k A^k.
\end{equation}

We now state and prove several facts.

\textbf{Fact 0.}
\begin{equation}
    \sum\limits_{t = 0}^{n} \tilde{f}_t = f_n
\end{equation}

We  recall that $\tilde{f}_k$ has the generating function $\sqrt{1 - x}$, which, when multiplied by $\frac{1}{1 - x}$, gives $\frac{1}{\sqrt{1 - x}}$, the generating function of $f_k$, implying the convolution identity.

\textbf{Fact 1.}
\begin{equation}
    \|C^{-1}\|_{2} \le 3.
\end{equation}

We use Lemma 13 from \cite{andersson2024streaming} for the full Toeplitz matrix to show that:

\begin{align}
    \|C^{-1}\|_{2} &\le |q_0| + \sum\limits_{t = 1}^{n - 1}|q_t| + \sum\limits_{t = 1}^{n - 1}|q_{-t}| \le \frac{1 - \tilde{f}_n}{\sqrt{2}} + \frac{1}{\sqrt{2}}\sum\limits_{t = 1}^{n - 1} |\tilde{f}_{t}| + \sum\limits_{t = 1}^{n - 1} |\tilde{f}_{n - t}|\\
    &\le \sqrt{2} + \sqrt{2} \sum\limits_{t = 1}^{n - 1} (-\tilde{f}_t) \le \sqrt{2} + \sqrt{2} \sum\limits_{t = 1}^{n - 1}\frac{1}{2t - 1} \frac{1}{\sqrt{\pi t}}\\
    &\le \sqrt{2} + \sqrt{\frac{2}{\pi}} \sum\limits_{t = 1}^{\infty} \frac{1}{(2t - 1)\sqrt{t}} \le 3.
\end{align}

\textbf{Fact 2.}  
\begin{equation}
    \|C^{-1}ee^TC^{-1}\|_2 = \|C^{-1}e\|_2\|C^{-T}e\|_2 \le 2 + 2\log (n)
\end{equation}

We first find the bound for the $k$-th row sum:
\begin{align}
    (C^{-1}e)_k &= \sum\limits_{t = 0}^k q_t + \sum\limits_{t = 1}^{n - 1- k} q_{-t} = \frac{1}{\sqrt{2}}\sum\limits_{t = 0}^k\sum\limits_{l = 0}^{\infty} \tilde{f}_{t + nl} (-1)^l + \frac{1}{\sqrt{2}} \sum\limits_{t = 1}^{n - 1 - k} \sum\limits_{l = 1}^{\infty} \tilde{f}_{-t + nl} (-1)^l \\
    &=\frac{1}{\sqrt{2}}\sum\limits_{t = 0}^k\sum\limits_{l = 0}^{\infty} \tilde{f}_{t + nl} (-1)^l + \frac{1}{\sqrt{2}} \sum\limits_{t = k + 1}^{n - 1} \sum\limits_{l = 1}^{\infty} \tilde{f}_{t + n(l - 1)} (-1)^l\\
    &= \frac{1}{\sqrt{2}}\sum\limits_{l = 0}^\infty(-1)^l \left[\sum\limits_{t = 0}^{k} \tilde{f}_{t + nl} - \sum\limits_{t = k + 1}^{n - 1} \tilde{f}_{t + nl} \right].
\end{align}

We now use Fact 0 for the partial sums of $\tilde{f}_t$, therefore:
\begin{align}
    (C^{-1}e)_k &= \frac{1}{\sqrt{2}}\sum\limits_{l = 0}^{\infty} (-1)^l (2f_{nl + k} - f_{-1 + nl} - f_{n - 1 + nl})= \sqrt{2}\sum\limits_{l = 0}^{\infty}(-1)^l f_{k + nl} \le \sqrt{2} f_k.
\end{align}

Analogously, the $k$-th column sum is given by:
\begin{align}
    (C^{-T}e)_k &= \sum\limits_{t = 1}^k q_{-t} + \sum\limits_{t = 0}^{n - 1- k} q_{t} = \frac{1}{\sqrt{2}}\sum\limits_{t = 1}^k\sum\limits_{l = 1}^{\infty} \tilde{f}_{-t + nl} (-1)^l + \frac{1}{\sqrt{2}} \sum\limits_{t = 0}^{n - 1 - k} \sum\limits_{l = 0}^{\infty} \tilde{f}_{t + nl} (-1)^l \\
    &= \frac{1}{\sqrt{2}}\sum\limits_{l = 0}^\infty(-1)^l \left[-\sum\limits_{t = n - k}^{n-1} \tilde{f}_{t + nl} + \sum\limits_{t = 0}^{n - 1 - k} \tilde{f}_{t + nl} \right]\\
    &= \frac{1}{\sqrt{2}}\sum\limits_{l = 0}^\infty(-1)^l(2f_{n - k - 1 + nl} - f_{n - 1 + nl} - f_{-1 + nl})\\
    &= \sqrt{2}\sum\limits_{l = 0}^{\infty} (-1)^l f_{n -k - 1 + nl} \le \sqrt{2}f_{n - k - 1}
\end{align}

Thus,
\begin{align}
    \|C^{-1}ee^TC^{-1}\|_2 &= \|C^{-1}e\|_2\|C^{-T}e\|_2 = \sqrt{\sum\limits_{k = 0}^{n - 1}(C^{-1}e)_k^2 \times \sum\limits_{k = 0}^{n - 1}(C^{-T}e)_k^2}\\
    &\le 2 \sqrt{\sum\limits_{k = 0}^{n - 1} f_k^2 \times \sum\limits_{k = 0}^{n - 1} f_{n - k - 1}^2} = 2\sum\limits_{k = 0}^{n - 1}f_k^2 \le 2 + 2\log (n).
\end{align}
For the last inequality, we used an upper bound from Lemma 7 in \cite{kalinin2024}, thus concluding the proof.

\textbf{Fact 3.} For $n \ge 2$, the following holds:
\begin{equation}
    e^T C^{-1} e \ge 1.02 \sqrt{n}.
\end{equation}

We first assume that $n \ge 40$; for smaller values, we verify numerically that the inequality holds.

\begin{align}
    e^T C^{-1}e &= n \frac{q_0}{\sqrt{2}} + \frac{1}{\sqrt{2}} \sum\limits_{t = 1}^{n - 1}(n - t)q_t + \frac{1}{\sqrt{2}}\sum\limits_{t = 1}^{n - 1} (n - t)q_{-t} \\
    &=\frac{n}{\sqrt{2}}\sum\limits_{l = 0}^{\infty} \tilde{f}_{nl} (-1)^{l} + \frac{1}{\sqrt{2}}\sum\limits_{t = 1}^{n- 1}(n - t)\sum\limits_{l = 0}^{\infty} \tilde{f}_{t + nl} (-1)^l + \frac{1}{\sqrt{2}}\sum\limits_{t = 1}^{n- 1}(n - t)\sum\limits_{l = 0}^{\infty} \tilde{f}_{-t + nl} (-1)^l \\
    &= \frac{n}{\sqrt{2}}\sum\limits_{l = 0}^{\infty} \tilde{f}_{nl} (-1)^{l} + \frac{1}{\sqrt{2}}\sum\limits_{t = 1}^{n- 1}(n - t)\sum\limits_{l = 0}^{\infty} \tilde{f}_{t + nl} (-1)^l  - \frac{1}{\sqrt{2}}\sum\limits_{t = 1}^{n- 1}t\sum\limits_{l = 0}^{\infty} \tilde{f}_{t + nl} (-1)^l \\
    &= \frac{n}{\sqrt{2}}\sum\limits_{t = 0}^{n- 1}\sum\limits_{l = 0}^{\infty} \tilde{f}_{t + nl} (-1)^l - \sqrt{2}\sum\limits_{t = 1}^{n- 1}t\sum\limits_{l = 0}^{\infty} \tilde{f}_{t + nl} (-1)^l \\
    &= \frac{n}{\sqrt{2}}\sum\limits_{l = 0}^{\infty}(-1)^l\sum\limits_{t = 0}^{n- 1} \tilde{f}_{t + nl}  - \sqrt{2}\sum\limits_{l = 0}^{\infty}(-1)^l\sum\limits_{t = 1}^{n- 1}t \tilde{f}_{t + nl}.
\end{align}

By Fact 0, we have:
\begin{equation}
    \sum\limits_{t = 0}^{n- 1} \tilde{f}_{t + nl} = f_{n - 1 + nl} - f_{-1 + nl}.
\end{equation}

Thus, substituting:
\begin{align}
     e^T C^{-1}e &= \frac{n}{\sqrt{2}}\sum\limits_{l = 0}^{\infty}(-1)^l(f_{n - 1 + nl} - f_{-1 + nl})  - \sqrt{2}\sum\limits_{l = 0}^{\infty}(-1)^l\sum\limits_{t = 1}^{n- 1}t \tilde{f}_{t + nl} \\
     &=\sqrt{2}n\sum\limits_{l = 0}^{\infty}(-1)^lf_{n - 1 + nl}  + \sqrt{2}\sum\limits_{l = 0}^{\infty}(-1)^l\sum\limits_{t = 1}^{n- 1}\frac{t}{2(t + nl) - 1} f_{t + nl} \\
     &= 2n\left(b_f(\omega^{-(n -1)}) - \frac{1}{2\sqrt{n}}\right)  + \sqrt{2}\sum\limits_{l = 0}^{\infty}(-1)^l\sum\limits_{t = 1}^{n- 1}\frac{t}{2(t + nl) - 1} f_{t + nl}.
\end{align}

For the first sum, we use the inequality from Lemma \ref{lem:b_f_bounds}:
\begin{align}
    2n\left(b_f(\omega^{-(n -1)}) - \frac{1}{2\sqrt{n}}\right) &\ge 2n \left(\frac{f_{n - 1}}{\sqrt{2}} - \frac{f_{2n - 1}}{\sqrt{n}} + \frac{0.614}{\sqrt{n}} - \frac{0.114}{n^{3/2}} - \frac{1}{2\sqrt{n}}\right) \\
    &\ge 2n \left(\frac{1}{\sqrt{2\pi n}} - \frac{1}{\sqrt{2\pi(2n - 1)}} + \frac{0.114}{\sqrt{n}} - \frac{0.114}{n^{3/2}}\right) \\
    &\ge 2\sqrt{n} \left(\frac{1}{\sqrt{2\pi}} - \sqrt{\frac{20}{79\pi}} + 0.114\right) - \frac{0.228}{\sqrt{n}} \\
    &\ge  0.458\sqrt{n} - \frac{0.228}{\sqrt{n}},
\end{align}
where we have used that $
    \frac{1}{\sqrt{2\pi(2n - 1)}} \le \sqrt{\frac{20}{79\pi n}}, \quad \text{for } n \ge 40$.
For the second sum, we recognize that it is an alternating sum with decreasing values, therefore it is lower bounded by the first term minus the second term:

\begin{align}
     \sqrt{2}\sum\limits_{l = 0}^{\infty}(-1)^l\sum\limits_{t = 1}^{n- 1}\frac{t}{2(t + nl) - 1} f_{t + nl} 
     &\ge \sqrt{2}\sum\limits_{t = 1}^{n- 1}\frac{t}{2t - 1} f_{t} - \sqrt{2}\sum\limits_{t = 1}^{n- 1}\frac{t}{2(t + n) - 1} f_{t + n}\\
     &\ge \frac{1}{\sqrt{2\pi}} \sum\limits_{t = 1}^{n - 1} \frac{1}{\sqrt{t + 1}} -\sqrt{2} \frac{n^2}{4n - 1} f_n\\
     &\ge \frac{\sqrt{2}}{\sqrt{\pi}}(\sqrt{n} - \sqrt{2}) -\frac{\sqrt{2}}{\sqrt{\pi}}\frac{40}{159} \sqrt{n} \\
     &= \frac{119}{159} \sqrt{\frac{2}{\pi}}\sqrt{n} - \frac{2}{\sqrt{\pi}} \ge 0.597 \sqrt{n} - 1.129,
\end{align}
assuming $n \ge 40$.  Combining these bounds, we obtain:

\begin{align}
    e^T C^{-1}e &\ge 0.458\sqrt{n} - \frac{0.228}{\sqrt{n}} + 0.597 \sqrt{n} - 1.129 \\
    &\ge 1.05 \sqrt{n} - \frac{0.228}{\sqrt{n}} - 1.129 \ge 1.02 \sqrt{n},
\end{align}
where the last inequality holds for $n \ge 38$. Since we have assumed $n \ge 40$, this concludes the proof of Fact 3.

\textbf{Main proof.}

We combine all the facts to obtain a bound:

\begin{align}
    \max_{i \in \{1, 2\}}\|L_i^{-1}\|_2 &\leq 2 \|C^{-1}\|_2 + \frac{2\|C^{-1}ee^T C^{-1}\|_2}{\min(|e^T C^{-1} e - \sqrt{n}|, |e^T C^{-1} e + \sqrt{n}|)} \\
    &\leq 6 + \frac{4(1 + \log n)}{0.02\sqrt{n}} \leq 6 + 200\frac{2}{\sqrt{e}} \leq 250.
\end{align}

We note that, numerically, the maximum spectral norm does not appear to exceed 19.
\end{proof}

\end{document}